\renewcommand\Pl{\Agt}
\newcommand\longVersion[1]{}
\newcommand\draft[1]{}
\newenvironment{myitemize}{
  \begin{inparaenum}}{
  \end{inparaenum}}
\newenvironment{mydefinition}[1][]{\paragraph{\bf #1.}}{}
\title{Robust Equilibria in Mean-Payoff Games}
\author{Romain Brenguier\thanks{Work supported by ERC Starting Grant inVEST (279499) and EPSRC grant EP/M023656/1.}}
\institute{University of Oxford, UK}
\newcommand\mylabel[2]{\label{#2}}
\newcommand\mysetcounter[2]{}
\newcommand\mysubsubsection[1]{\smallskip \noindent \textbf{#1} }
\begin{document}

\maketitle

\begin{abstract}
  We study the problem of finding robust equilibria in multiplayer concurrent games with mean payoff objectives.
  A $(k,t)$-robust equilibrium is a strategy profile such that no coalition of size $k$ can improve the payoff of one its member by deviating, and no coalition of size $t$ can decrease the payoff of other players.
  While deciding whether there exists such equilibria is undecidable in general, we suggest algorithms for two meaningful restrictions on the complexity of strategies.
  The first restriction concerns memory.
  We show that we can reduce the problem of the existence of a memoryless robust equilibrium to a formula in the (existential) theory of reals.
  The second restriction concerns randomisation.
  We suggest a general transformation from multiplayer games to two-player games such that pure equilibria in the first game correspond to winning strategies in the second one.
  Thanks to this transformation, we show that the existence of robust equilibria can be decided in polynomial space, and that the decision problem is \PSPACE-complete.
\end{abstract}


\section{Introduction}
Games are intensively used in computer science to model interactions in computerised systems.
Two player antagonistic games have been successfully used for the synthesis of reactive systems.
In this context, the opponent acts as a hostile environment, and winning strategies provide controllers that ensure correctness of the system under any scenario.
In order to model complex systems in which several rational entities interact, multiplayer concurrent games come into the picture.
Correctness of the strategies can be specified with different solution concepts, which describe formally what is a ``good'' strategy.
In game theory, the fundamental solution concept is Nash equilibrium~\cite{nash50}, where no player can benefit from changing its own strategy.
The notion of robust equilibria refines Nash equilibria in two ways:
\begin{myitemize}
\item
  a robust equilibrium is \emph{resilient}, \ie when a ``small'' coalition of player changes its strategy, it can not improve the payoff of one of its participants;
\item
  it is \emph{immune}, \ie when a ``small'' coalition changes its strategy, it will not lower the payoff of the non-deviating players.
\end{myitemize}
The size of what is considered a small coalition is determined by a bound $k$ for resilience and another $t$ for immunity.
When a strategy is both $k$-resilient and $t$-immune, it is called a $(k,t)$-robust equilibrium.
We also generalise this concept to the notion $(k,t,r)$-robust equilibrium, where if $t$ players are deviating, the others should not have their payoff decrease by more than $r$.

\paragraph*{Example} 
In the design of network protocols, when many users are interacting, coalitions can easily be formed and resilient strategies are necessary to avoid deviation.
It is also likely that some clients are faulty and begin to behave unexpectedly, hence the need for immune strategies.

As an example, consider a program for a server that distributes files, of which a part is represented in \figurename~\ref{fig:program}.
The functions \texttt{listen} and \texttt{send\_files} will be run in parallel by the server.
Some choices in the design of these functions have not been fixed yet and we wish to analyse the robustness of the different alternatives.

This program uses a table~\texttt{clients} to keep track of the clients which are connected.
Notice that the table has fixed size 2, which means that if 3 clients try to connect at the same time, one of them may have its socket overwritten in the table and will have to reconnect later to get the file.
We want to know what strategy the clients should use and how robust the protocol will be: can clients exploit the protocol to get their files faster than the normal usage, and how will the performance of the over clients be affected.

We consider different strategies to chose between the possible alternatives in the program of \figurename~\ref{fig:program}.
The strategy that chooses alternatives $1$ and $3$ does not give $1$-resilient equilibria even for just two clients:
Player~$1$ can always reconnect just after its socket was closed, so that \texttt{clients[0]} points to player~$1$ once again.
In this way, he can deviate from any profile to never have to wait for the file.
Since the second player could do the same thing, no profile is $1$-resilient (nor $1$-immune).
For the same reasons, the strategy $2, 3$ does not give $1$-resilient equilibria.
The strategy $1, 4$ does not give $1$-resilient equilibria either, since player~$1$ can launch a new connection after player~$2$ to overwrite \texttt{clients[1]}.

The strategy $2, 4$ is the one that may give the best solution. 
We modelled the interaction of this program as a concurrent game for a situation with 2 potential clients in \figurename~\ref{fig:game-from-program}.
The labels represent the content of the table \texttt{clients}: $0$ means no connection, $1$ means connected with player~$1$ and $2$ connected with player~$2$; and the instruction that the function \texttt{send\_files} is executing.
Because the game is quite big we represent only the part where player $1$ connects before player~$2$ and the rest of the graph can be deduced by symmetry.
The actions of the players are either to wait (action {\texttt w}) or to connect (action {\texttt{ch}} or {\texttt{ct}}).
Symbol $\ast$ means any possible action.
Note that in the case both player try to connect at the same time we simulate a matching penny game in order to determine which one will be treated first, this is the reason why we have two different possible actions to connect (\texttt{ch} for ``head'' and \texttt{ct} for ``tail'').
Clients have a positive reward when we send them the file they requested,
this corresponds for Player~$i$ to states labelled with \texttt{send(i)}.

If both clients try to connect in the same slot, we use a matching penny game to decide which request was the first to arrive.
For a general method to transform a game with continuous time into a concurrent game, see~\cite[Chapter~6]{brenguier12}.

\begin{figure}[htb]
\begin{minipage}{0.5\textwidth}
{\scriptsize
  \begin{verbatim}
clients = new socket[2];

void listen() {
  while(true) {
    Socket socket = serverSocket.accept();
    if(clients[0].isConnected())
    //// Two possible alternatives:
    | 1) clients[1] = socket;
    | 2) if(socket.remoteSocketAddress() 
          != clients[0].remoteSocketAddress())
    |      clients[1] = socket;
    else
      clients[0] = socket;
  } }

void send_files() {
  while(true) {
    if(clients[0].isConnected()) {
      send(clients[0]);
      clients[0].close();
    } 
    //// Two possible alternatives :
    | 3) else if(clients[1].isConnected()) {  
    | 4) if(clients[1].isConnected()) {
      send(clients[1]);                  
      clients[1].close();                
} } }
  \end{verbatim}
}
  \caption{Example of a server program.}\label{fig:program}
\end{minipage}
  \begin{minipage}{0.5\textwidth}
  \centering{\scriptsize
    \begin{tikzpicture}[yscale=0.8,xscale=0.45]
      \draw (0,0) node[draw,rounded corners=3mm,minimum size=6mm] (S00) { $[0,0]$ };
      \draw[-latex'] (-2,0) -- (S00);

      \draw (5,3) node[draw,rounded corners=3mm,minimum size=6mm,text width=1cm,text centered] (T10) { $[1,0]$ \texttt{send(1)} \texttt{close(1)}};
      \draw (5,1) node[draw,rounded corners=3mm,minimum size=6mm,text width=1cm,text centered] (T12) { $[1,2]$ \texttt{send(1)} \texttt{close(1)}};
      \draw (5,-3) node[draw,rounded corners=3mm,minimum size=6mm,text width=1cm,text centered] (T20) { $[2,0]$ \texttt{send(2)}};
      \draw (0,-3) node[draw,rounded corners=3mm,minimum size=6mm,text width=1cm,text centered] (T21) { $[2,1]$ \texttt{send(2)}};

      \draw[-latex'] (S00) -- node[above,sloped] {\texttt{ch,w}} node[below,sloped]{\texttt{ct,w}} (T10);
      \draw[-latex'] (S00) -- node[above,sloped] {\texttt{ch,ct}} node[below,sloped]{\texttt{ct,ch}} (T12);
      \draw[-latex'] (S00) -- node[above,sloped] {\texttt{w,ct}} node[below,sloped]{\texttt{w,ch}} (T20);
      \draw[-latex'] (S00) -- node[above,sloped] {\texttt{ch,ch}} node[below,sloped]{\texttt{ct,ct}} (T21);
      \draw[-latex'] (S00) .. controls +(0,-1) and +(-1,-1) .. node[below left,sloped] {\texttt{w,w}} (S00);

      \draw(T21.-90) node[below] {\dots};
      \draw(T20.-90) node[below] {\dots};
     \draw (10,0) node[draw,rounded corners=3mm,minimum size=6mm,text width=1cm,text centered] (V12) {$[0,2]$};

      \draw[-latex'] (T12) -- node[above,sloped] {$\ast$, $\ast$} (V12);
      \draw[-latex'] (T10) -- node[above,sloped] {$\ast$, \texttt{ch}} node[below,sloped]{\texttt{$\ast$, ct}} (V12);

      \draw (5,-1) node[draw,rounded corners=3mm,minimum size=6mm,text width=1cm,text centered] (W02) { $[0,2]$ \texttt{send(2)} \texttt{close(2)}};
      \draw (10,3) node[draw,rounded corners=3mm,minimum size=6mm,text width=1cm,text centered] (W12) { $[1,2]$ \texttt{send(2)} \texttt{close(2)}};
      \draw (8.5,-2) node[draw,rounded corners=3mm,minimum size=6mm,text width=1cm,text centered] (W22) { $[2,2]$ \texttt{send(2)} \texttt{close(2)}};
      \draw (11,-3) node[draw,rounded corners=3mm,minimum size=6mm,text width=1cm,text centered] (W21) { $[2,1]$ \texttt{send(1)} \texttt{close(1)}};

      \draw[-latex'] (V12) -- node[above,sloped] {\texttt{w,w}} (W02);
      \draw[-latex'] (V12) -- node[right,pos=0.2] {\texttt{ch,w}} node[right,pos=0.4]{\texttt{ct,w}} node[right,pos=0.6]{\texttt{ch,ct}} node[right,pos=0.8]{\texttt{ct,ch}} (W12);
      \draw[-latex'] (V12) -- node[left,pos=0.3] {\texttt{w,ct}} node[left,pos=0.7]{\texttt{w,ch}} (W22);
      \draw[-latex',rounded corners=3mm] (V12) -- node[pos=0.5,right] {\texttt{ch,ch}} node[right,pos=0.7]{\texttt{ct,ct}} (W21);

      \draw[-latex',rounded corners=3mm] (T10) -| node[above,pos=0.4] {$\ast$, \texttt{w}} (S00);
      \draw[-latex'] (W02) -- (S00);
      \draw[-latex'] (W12) -- (T10);
      \draw[-latex'] (W22) -- (T20);
      \draw[-latex'] (W21) -- (T20);

    \end{tikzpicture}
  }
  \caption{Example of a concurrent game generated from the program of \figurename~\ref{fig:program}.
  }\label{fig:game-from-program}
  \end{minipage}
\end{figure}

\paragraph*{Related works and comparison with Nash equilibria and secure equilibria}
Other solution concepts have been proposed as concepts for synthesis of distributed systems,
in particular Nash equilibrium~\cite{ummels2009complexity,ummels2011complexity,BBMU12}, subgame perfect equilibria~\cite{Ummels08,brihaye2010}, and secure equilibria~\cite{CHJ05}.
A subgame perfect equilibria is a particular kind of Nash equilibria, where at any point in the game, if we forget the history the players are still playing a Nash equilibrium.
In a secure equilibria, we ask that no player can benefit or keep the same reward while reducing the payoff of other players by changing its own strategy.
However these concepts present two weaknesses:
\begin{myitemize}
\item
  There is no guarantee when two (or more) users deviate together.
  It can happen on a network that the same person controls several devices (a laptop and a phone for instance) and can then coordinate there behaviour.
  In that case, the devices would be considered as different agents and Nash equilibria offers no guarantee.
\item
  When a deviation occurs, the strategies of the equilibrium can punish the deviating user without any regard for payoffs of the others.
  This can result in a situation where, because of a faulty device, the protocol is totally blocked.
\end{myitemize}
By comparison, finding resilient equilibria with $k$ greater than $1$, ensures that clients have no interest in forming coalitions (up to size $k$), and finding immune equilibria with $t$ greater than $0$ ensures that other clients will not suffer from some agents (up to $t$) behaving differently from what was expected.

Note that the concept of robust equilibria for games with LTL objectives is expressible in logics such as strategy logic~\cite{CHP10} or $\text{ATL}^*$~\cite{AHK02}.
However, satisfiability in these logic is difficult: it is 2\ComplexityFont{EXPTIME}-complete for ATL$^*$ and undecidable for strategy logic in general (2\ComplexityFont{EXPTIME}-complete fragments exist~\cite{MMPV12}).
Moreover, these logics cannot express equilibria in quantitative games such as mean-payoff.


\paragraph*{Contributions}
In this paper, we study the problem of finding robust equilibria in multiplayer concurrent games.
This problem is undecidable in general (see Section~\ref{sec:undec}).
In Section~\ref{sec:randomised}, we show that if we look for stationary (but randomised) strategies, then the problem can be decided using the theory of reals.
We then turn to the case of pure (but memoryful) strategies.
In Section~\ref{sec:suspect}, we describe a generic transformation from multiplayer games to two-player games.
The resulting two-player game is called the \newdef{deviator game}.
We show that pure equilibria in the original game correspond to winning strategies in the second one.
In Section~\ref{sec:mean-payoff}, we study quantitative games with mean-payoff objectives.
We show that the game obtained by our transformation is equivalent to a multidimensional mean-payoff game.
We then show that this can be reduced to a value problem with linear constraints in multidimensional mean-payoff games.
We show that this can be solved in polynomial space, by making use of the structure of the deviator game.
In Section~\ref{sec:hardness}, we prove the matching lower bound which shows the robustness problem is \PSPACE-complete.
Due to space constraints, most proofs have been omitted from this paper; they can be found in the appendix.

\section{Definitions}

\subsection{Weighted concurrent games}

We study concurrent game as defined in~\cite{AHK02} with the addition of weights on the edges.

\begin{mydefinition}[Concurrent games]
  A \newdef{weighted concurrent game} (or simply a \newdef{game})~$\calG$ is a tuple $\langle \Stat, s_0,$ $\Pl, \Act, \Tab, (w_A)_{A\in\Agt} \rangle$, where:
  \begin{inparaitem}[]
  \item $\Stat$ is a finite set of \newdef{states} and $s_0 \in \Stat$ is the \newdef{initial state};
  \item $\Pl$ is a finite set of \newdef{players}; 
  \item $\Act$ is a finite set of \newdef{actions}; a tuple $(\shortAct_A)_{A\in\Pl}$ containing one action~$\shortAct_A$ for each player~$A$ is called a \newdef{move};
  \item $\Tab : \Stat \times \Act^\Pl \rightarrow \Stat$ 
    is the \newdef{transition function}, it associates with a given state and a given move, the resulting state;
  \item for each player $A \in \Agt$, $w_A \colon \Stat \mapsto \Z$ is a \newdef{weight function} which assigns to each agent an integer weight.
  \end{inparaitem}
\end{mydefinition}

In a game~$\calG$, whenever we arrive at a state~$\stat$, the players simultaneously select an action.
This results in a move~$\shortAct_\Pl$; the next state of the game is then $\Tab(\stat,a_\Pl)$.
This process starts from $s_0$ and is repeated to form an infinite sequence of states.

An example of a game is given in \figurename~\ref{fig:game-from-program}. 
It models the interaction of two clients $A_1$ and $A_2$ with the program presented in the introduction.
The weight functions for this game are given by $w_{A_1} = 1$ in states labelled by \texttt{send(1)} and $w_{A_1} = 0$ elsewhere, similarly $w_{A_2} = 1$ in states labelled by \texttt{send(2)}.

\begin{mydefinition}[History and plays]
A \newdef{history} of the game~$\calG$ is a finite sequence of states and moves ending with a state, \ie an word in $(\Stat\cdot \Act^\Pl)^*\cdot \Stat$.
We write $h_i$ the $i$-th state of $h$, starting from $0$, and $\act_i(h)$ its $i$-th move, thus $h= h_0 \cdot \act_0(h) \cdot h_1 \cdots \act_{n-1}(h) \cdot h_n$.
The length $|h|$ of such a history is $n+1$.
We write $\last(h)$ the last state of $h$, \ie $h_{|h|-1}$.
A \newdef{play}~$\rho$ is an infinite sequence of states and moves, 
\ie an element of $(\Stat \cdot \Act^\Agt)^\omega$.
We write $\rho_{\le n}$ for the prefix of $\rho$ of length $n+1$, \ie the history $\rho_0 \cdot \act_0(\rho) \cdots \act_{n-1}(\rho) \cdot \rho_n$.

The \newdef{mean-payoff} of weight~$w$ along a play~$\rho$ is the average of the weights along the play:
\(\MP_w(\rho) = \liminf_{n \rightarrow \infty} \frac{1}{n} \sum_{0 \le k\le n} w(\rho_{k}). \)
The \newdef{payoff} for agent $A\in\Agt$ of a play~$\rho$ is the mean-payoff of the corresponding weight:
$\payoff_A(\rho) = \MP_{w_A}(\rho)$.
Note that it only depends on the sequence of states, and not on the sequence of moves.
The \newdef{payoff vector} of the run $\rho$ is the vector $p\in \mathbb{R}^\Agt$ such that for all players~$A \in \Agt$, $p_A = \payoff_A(\rho)$; we simply write $\payoff(\rho)$ for this vector.
\end{mydefinition}


\begin{mydefinition}[Strategies]
  Let~$\calG$ be a game, and~$A\in\Agt$.    
  A \newdef{strategy} for player~$A$ maps histories to probability distributions over actions.
  Formally, a strategy is a function $\sigma_A\colon \histories \to \calD(\Act)$, where $\calD(\Act)$ is the set of probability distributions over $\Act$.
  For an action $a \in \Act$, we write $\sigma_A(a \mid h)$ the probability assigned to $a$ by the distribution $\sigma(h)$.
  A \newdef{coalition}~$C\subseteq \Pl$ is a set of players, its size is the number of players it contains and we write it $|C|$.
  A~strategy~$\sigma_C = (\sigma_A)_{A\in C}$ for a coalition~$C\subseteq \Agt$ is a tuple of strategies, one for each player in~$C$.
  We write $\sigma_{-C}$ for a strategy of coalition $\Agt\setminus C$.
  A~\newdef{strategy profile} is a strategy for~$\Pl$. 
  We will write $(\sigma_{- C},\sigma'_C)$ for the strategy profile~$\sigma''_\Agt$ such that if $A \in C$ then $\sigma''_A = \sigma'_A$ and otherwise $\sigma''_A = \sigma_A$.
  We~write $\Strat_\calG(C)$ for the set of strategies of coalition~$C$.
  A strategy $\sigma_A$ for player~$A$ is said \newdef{deterministic} if it does not use randomisation: for all histories~$h$ there is an action $a$ such that $\sigma(a \mid h) = 1$.
  A strategy $\sigma_A$ for player~$A$ is said \newdef{stationary} if it depends only on the last state of the history: for all histories~$h$, $\sigma_A(h) = \sigma_A(\last(h))$.
\end{mydefinition}

\begin{mydefinition}[Outcomes]
Let $C$ be a~coalition, and $\sigma_C$ a~strategy for~$C$. 
A~history~$h$ is \newdef{compatible} with the strategy~$\sigma_C$ if, for all~$k<\length h - 1$, $(\act_k(h))_A = \sigma_A(h_{\le k})$ for all~$A\in C$, and $\Tab(h_{k}, \act_k(h)) = h_{k+1}$.
A~play~$\rho$ is \newdef{compatible} with the strategy~$\sigma_C$ if all its prefixes are.
We~write~$\Out_{\calG}(\stat,\sigma_C)$ for the set of plays in~$\calG$ that
are compatible with strategy~$\sigma_C$ of~$C$ and have initial state~$\stat$,
these paths are called~\emph{outcomes} of $\sigma_C$ from~$\stat$.
We simply write~$\Out_{\calG}(\sigma_C)$ when $\stat = \stat_0$ and $\Out_{\calG}$ is the set of plays that are compatible with some strategy.
Note that when the coalition~$C$ is composed of all the players and the strategies are deterministic the outcome is unique.


An \newdef{objective}~$\Omega$ is a set of plays and a strategy $\sigma_C$ is said \newdef{winning} for objective $\Omega$ if all its outcomes belong to $\Omega$.

\end{mydefinition}


\begin{mydefinition}[Probability measure induced by a strategy profile]
Given a strategy profile $\sigma_\Agt$, the \newdef{conditional probability} of $a_\Agt$ given history~$h\in\histories$ is $\sigma_\Agt(a_\Agt \mid h) = \prod_{A \in \Agt} \sigma_A(a_A\mid h)$.
The probabilities $\sigma_\Agt(a_\Agt \mid h)$ induce a probability measure on the Borel $\sigma$-algebra over $(\Stat \cdot \Act^\Agt)^\omega$ as follows:
the probability of a basic open set $h \cdot (\Stat \cdot \Act^\Agt)^\omega$ equals the product $\prod_{j=1}^n \sigma_\Agt(h^\Act_{j,A} \mid h_{\le j})$ if $h_0 = s_0$ and $\Tab(h_j,h^\Act_{\Agt,j}) = h_{j+1}$ for all $1 \le j < n$; in all other cases, this probability is 0.
By Carathéodory's extension theorem, this extends to a unique probability measure assigning a probability to every Borel subset of $(\Stat \cdot \Act^\Agt)^\omega$, which we denote by $\Pr^{\sigma_\Agt}$.
We denote by $\E^{\sigma_\Agt}$ the expectation operator that corresponds to $\Pr^{\sigma_\Agt}$, that is $\E^{\sigma_\Agt}(f) = \int f\ \text{d}\Pr^{\sigma_\Agt}$ for all Borel measurable functions $f \colon (\Stat \cdot \Act^\Agt)^\omega \mapsto \mathbb{R} \cup \{ \pm \infty\}$. 
The expected payoff for player $A$ of a strategy profile $\sigma_\Agt$ is $\payoff_A(\sigma_\Agt) = \E^{\sigma_\Agt}(\MP_{w_A})$.
\end{mydefinition}


\subsection{Equilibria notions}

We now present the different solution concepts we will study.
Solution concepts are formal descriptions of ``good'' strategy profiles.
The most famous of them is Nash equilibrium~\cite{nash50}, in which no single player can improve the outcome for its own preference relation, by only changing its strategy.
This notion can be generalised to consider coalitions of players, it is then called a resilient strategy profile.
Nash equilibria correspond to the special case of $1$-resilient strategy profiles.

\begin{mydefinition}[Resilience~\cite{aumann1959acceptable}]
Given a coalition~$C \subseteq \Agt$, a strategy profile~$\sigma_{\Pl}$ is \newdef{$C$-resilient} if for all agents $A$ in $C$, $A$ cannot improve her payoff even if all agents in $C$ change their strategies, \ie $\sigma_{\Pl}$ is said $C$-resilient when:
\[
\forall \sigma'_C \in \Strat_\calG(C).\ \forall A\in C.\ 
\payoff_{A} (\sigma_{-C}, \sigma'_C) \le \payoff_A(\sigma_\Pl)
\]
Given an integer $k$, we say that a strategy profile is $k$-resilient if it is $C$-resilient for every coalition~$C$ of size $k$.
\end{mydefinition}

\begin{mydefinition}[Immunity~\cite{abraham2006distributed}]
  Immune strategies ensure that players \emph{not} deviating are not too much affected by deviation.
  Formally, a strategy profile~$\sigma_{\Agt}$ is \newdef{($C,r$)-immune} if all players not in $C$, are not worse off by more than $r$ if players in $C$ deviates, \ie when:
\[
\forall \sigma'_C \in \Strat_\calG(C).\ \forall A\in \Agt \setminus C.\ 
\payoff_A(\sigma_\Pl) - r \le  \payoff_A(\sigma_{-C}, \sigma'_C) 
\]
Given an integer $t$, a strategy profile is said \newdef{($t,r$)-immune} if it is ($C,r$)-immune for every coalition~$C$ of size $t$.
Note that $t$-immunity as defined in~\cite{abraham2006distributed} corresponds to $(t,0)$-immunity.
\end{mydefinition}

\begin{mydefinition}[Robust Equilibrium~\cite{abraham2006distributed}]
Combining resilience and immunity, gives the notion of robust equilibrium:
a strategy profile is a \newdef{$(k,t,r)$-robust equilibrium} if it is both $k$-resilient and $(t,r)$-immune.
\end{mydefinition}

The aim of this article is to characterise robust equilibria in order to construct the corresponding strategies, and precisely describe the complexity of the following decision problem for mean-payoff games.
\begin{mydefinition}[Robustness Decision Problem]
Given a game~$\calG$, integers~$k$, $t$, rational $r$
does there exist a profile $\sigma_\Agt$, that is a $(k,t,r)$-robust equilibrium~$\sigma_\Agt$ in $\calG$?
\end{mydefinition}

\subsection{Undecidability}\label{sec:undec}
We show that allowing strategies that can use both randomisation and memory leads to undecidability.
The problem of existence of Nash equilibria has been shown to be undecidable if we put constraints on the payoffs in~\cite{ummels2011complexity}.
The proof was improved to involve only 3 players and no constraint on the payoffs for games with terminal-reward (the weights are non-zero only in terminal vertices of the game)~\cite{BMS14}.
This corresponds to the particular case where $k=1$, $t=0$ for the robustness decision problem.
Therefore, the following is a corollary of \cite[Thm.13]{BMS14}.
\begin{theorem}
  For randomised strategies, the robustness problem is undecidable even for 3 players, $k=1$ and $t=0$.
\end{theorem}

To recover decidability, two restrictions are natural.
In the next section, we will show that for randomised strategies with no memory, the problem is decidable.
The rest of the article is devoted to the second restriction, which concerns pure strategies.


\section{Stationary strategies}\label{sec:randomised}


We use the \emph{existential theory of reals}, which is the set of all existential first-order sentences that hold in the ordered field $\mathfrak{R} := (\mathbb{R},+,\cdot,0,1,\le)$, to show that the robustness decision problem is decidable.
The associated decision problem is in the \PSPACE\ complexity class~\cite{Can88}.
However, since the system of equation we produce is of exponential size, we can only show that the robustness problem for (randomised) stationary strategies is in \EXPSPACE.

\mysubsubsection{Encoding of strategies}
We encode a stationary strategy $\sigma_A$, by a tuple of real variables~$(\varsigma_{\stat,\shortAct})_{\stat\in \Stat,\shortAct\in \Act}\in \mathbb{R}^{\Stat\times\Act}$ such that $\varsigma_{s,a} = \sigma_A(a \mid s)$ for all $s\in\Stat$ and $a\in\Act$.
We then write a formula saying that these variables describe a correct stationary strategy.
The following lemma is a direct consequence of the definition of strategy.
\begin{lemma}\label{lem:strategy-equation}
  Let $(\varsigma_{s,a})_{s\in S, a \in \Act}$ be a tuple of real variables.
  The mapping $\sigma \colon s,a \mapsto \varsigma_s$ is a stationary strategy if, and only if, $\varsigma$ is a solution of the following equation:
  \begin{center}
  \(
  \mu(\varsigma) := 
  \bigwedge_{s\in S} \left( \sum_{a \in \Act } \varsigma_{s,a} = 1 \land \bigwedge_{a\in \Act} \varsigma_{s,a} \ge 0 \right)
  \)
  \end{center}
\end{lemma}

\mysubsubsection{Payoff of a profile}
We now give an equation which links a stationary strategy profile and its payoff.
For this we notice that a concurrent game where the stationary strategy profile has been fixed corresponds to a \emph{Markov reward process}~\cite{Puterman94}.
We recall that a Markov reward process is a tuple $\langle S, P, r \rangle$ where:
\begin{inparaenum} \item $S$ is a set of states; \item $P \in \mathbb{R}^{S\times S}$ is a transition matrix; \item $r \colon S \mapsto \mathbb{R}$ is a reward function. \end{inparaenum}
The expected value is then the expectation of the average reward of $r$.
This value for each state is described by a system of equations in \cite[Theorem~8.2.6]{Puterman94}.
We reuse these equations to obtain Thm.~\ref{thm:payoff-stationary}.
Details of the proof are in the appendix.

\begin{collect}{appendix-randomised}{}{}
We will recall how to compute values of \newdef{Markov reward process} (Lem.~\ref{lem:solution-MRP}), make explicit the correspondence with our problem on concurrent games (Lem.~\ref{lem:CGS-to-MRP}), and then give equations describing the payoff of strategy profile.

\begin{definition}[Markov reward processes]
  A \newdef{Markov reward process} is a tuple $\langle S, P, r \rangle$ where:
  \begin{inparaenum} \item $S$ is a set of states; \item $P \in \mathbb{R}^{S\times S}$ is a transition matrix; \item $r \colon S \mapsto \mathbb{R}$ is a reward function. \end{inparaenum}
  The probability of a history $h \in S^\ast$, is $\prod_{0 \le i < |h|} P(h_i,h_{i+1})$.
  The expected value of a Markov reward process is then defined similarly to the expected value of a concurrent game, as the expectation of $\MP_r$.
  The \newdef{gain}~$g \in \mathbb{R}^S$ is a vector such that for each state~$s$, $g(s)$ is the expected value from state $s$.
\end{definition}
\end{collect}

\begin{collect}{appendix-randomised}{}{}
\subsection{Proof of Thm.~\ref{thm:payoff-stationary}}
We recall the result of \cite{Puterman94} that we will use.
\begin{theorem}[{\cite[Theorem~8.2.6]{Puterman94}}]\label{thm:MRP}
Let $\langle S,P,r \rangle$ be Markov reward process.
We consider real variables of the form $\gamma_s$ and $\beta_s$ for each $s\in S$.
The gain is uniquely characterised has the solution for $\gamma$ of the equations
\((I - P) \cdot \gamma = 0\) and \(\gamma + (I - P) \cdot \beta = r\) where $I$ is the identity matrix.
This solution is such that $\gamma$ is equal to the gain\footnote{$\beta$ is equal to a quantity called the bias~\cite{Puterman94}}.
\end{theorem}

Thanks to this result, we can show that solutions of the following equation are such that the mapping $g \colon s \mapsto \gamma_s$ corresponds to the gain:
\begin{equation}\label{eq:MRP}
  \exists \beta\in \mathbb{R}^S.\ \bigwedge_{s \in S} \left( \gamma_s = \sum_{s'\in S} P_{s,s'} \cdot \gamma_{s'}\right) \land 
  \bigwedge_{s \in S} \left( \gamma_s + \beta_s = r(s) + \sum_{s'\in S} P_{s,s'} \cdot \beta_{s'} \right)
\end{equation}

\begin{lemma}\label{lem:solution-MRP}
  In a Markov reward process~$\langle S,P,r\rangle$, the solution $\gamma$ of equation~\eqref{eq:MRP} is such that for each state $s$, $g(s) = \gamma_s$ where $g$ is the gain function.
\end{lemma}
\begin{proof}
  By using \ref{thm:MRP}, $g$ (the gain) and $b$ (the bias) are the solutions for $\gamma$ and $\beta$ of the equations of \((I - P) \cdot \gamma = 0\) and \(\gamma + (I - P) \cdot \beta = r\).
  We can rewrite the equation and since we are only interested in $g$ we can abstract $b$:
  \[
  \exists \beta \in \mathbb{R}^S.\ \bigwedge_{s \in S} \left( \gamma_s - \sum_{s'\in S} P_{s,s'} \cdot \gamma_{s'} = 0\right) \land 
  \bigwedge_{s \in S} \left( \gamma_s + \beta_s - \sum_{s'\in S} P_{s,s'} \cdot \beta_{s'} = r(s) \right) 
  \]
  Which we then rewrite in the form of equation~\eqref{eq:MRP}.
\end{proof}

We now make explicit the correspondence with concurrent games.
Let $\calG$ be a concurrent game, $\sigma_\Agt$ be a strategy profile and $w$ a weight function.
We define the Markov reward process $\MRP(\calG,\sigma_\Agt, w) = \langle \Stat, P , w \rangle$ where for all $(s,s')\in \Stat\times \Stat$, $P(s,s') = \sum_{a_\Agt \mid \Tab(s,a_\Agt) = s'} \prod_{A \in \Agt} \sigma_A(a_A \mid s)$.

\begin{lemma}\label{lem:CGS-to-MRP}
  The expectation for $\MP_w$
  of the strategy profile $\sigma_\Agt$ is equal to the expected value of the Markov reward process $\MRP(\calG,\sigma_\Agt,w)$.
\end{lemma}
\begin{proof}
  For proving the lemma, since the weight functions are the same, it is enough to prove that the probability of an history in the Markov reward process is the same as its probability in the concurrent game knowing that the strategy profile $\sigma_\Agt$.
  This is done by induction over the length of history $h$.
  The case of length 1 is obvious.
  Now assume that $h$ is a history that has same probability~$p(h)$ in $\calG$ knowing $\sigma_\Agt$ and in $\MRP(\calG,\sigma_\Agt,A)$.
  The probability of $h\cdot s$ in $\calG$ knowing $\sigma_\Agt$ is the sum over actions profiles~$a_\Agt$ that lead from $s$ to $s'$ of the probability that the strategy profile $\sigma_{\Agt}$ chooses $a_{\Agt}$.
  This equals $p(h) \cdot \sum_{a_\Agt \mid \Tab(s,a_\Agt) = s'} \prod_{A \in \Agt} \sigma_A(a_A \mid s)$, which is also equal to the probability of a transition from $s$ to $s'$ in $\MRP(\calG,\sigma_\Agt,w)$ times $p(h)$.
  Therefore the probability of $h \cdot s$ is equal in $\calG$ knowing $\sigma_\Agt$ and in $\MRP(\calG,\sigma_\Agt,w)$.
  This shows the property.
\end{proof}
\end{collect}

\begin{collect*}{appendix-randomised}{
\begin{theorem}\mylabel{Thm}{thm:payoff-stationary}
  Let $\sigma_\Agt$ be a stationary strategy profile.
  The expectation for $\MP_w$ of $\sigma_\Agt$ from $s$ is the component $\gamma_s$ of the solution $\gamma$ of the equation:\\
  $\psi(\gamma,\sigma_\Agt,w)$ $:= \exists \beta\in \mathbb{R}^S.\ $ $\bigwedge_{s \in S} \left( \gamma_s = \sum_{a_\Agt \in\Act^\Agt} \gamma_{\Tab(s,a_\Agt)} \cdot \prod_{A \in \Agt} \sigma_A(a_A\mid s) \right)$ \\
  $ \land 
  \bigwedge_{s \in S} \left( \gamma_s + \beta_s = w(s) + \sum_{a_\Agt \in \Act^\Agt} \beta_{\Tab(s,a_\Agt)} \cdot\prod_{A \in \Agt}  \sigma_A(a_A \mid s) \right)$
\end{theorem}
}{}{}{}
\end{collect*}
\begin{collect}{appendix-randomised}{}{}
\begin{proof}
  Thanks to Lem.~\ref{lem:CGS-to-MRP} the expected payoff is the same than the value of $\MRP(\calG,\sigma_\Agt,w)$.
  If we apply Lem.~\ref{lem:solution-MRP} to this Markov reward process, and replace $P$ by its expression given from $\sigma_\Agt$, $r$ by $w$, we obtain that its gain is given by the equation:
  \begin{align*}
  \exists \beta\in \mathbb{R}^S.\ & \bigwedge_{s \in S} \left( \gamma_s = \sum_{s'\in S}  \sum_{\{a_\Agt \mid \Tab(s,a_\Agt) = s'\}} \gamma_{s'} \cdot \prod_{A \in \Agt} \sigma_A(a_A \mid s) \right) \\
  & \land 
  \bigwedge_{s \in S} \left( \gamma_s + \beta_s = w(s) + \sum_{s'\in S}  \sum_{\{a_\Agt \mid \Tab(s,a_\Agt) = s'\}} \beta_{s'} \cdot \prod_{A \in \Agt} \sigma_A(a_A \mid s) \right)
  \end{align*}
  We can then rewrite this, noticing that $\sum_{s'\in S} \sum_{\{a_\Agt \mid \Tab(s,a_\Agt) = s'\}} f(s,s',a_\Agt)$ is the same as \linebreak[3] $\sum_{a_\Agt \in \Act^\Agt} f(s,\Tab(s,a_\Agt),a_\Agt)$ and we obtain the desired expression.
\end{proof}
\end{collect}

\subsubsection{Optimal payoff of a deviation}

We now want to keep the strategy profile fixed but also allow deviations and investigate what is the maximum payoff that a coalition can achieve by deviating.
The system that is obtained is then a \emph{Markov decision processes}.
We recall that a Markov decision process~(MDP) is a tuple $\langle S,A,P, r\rangle$, where:
\begin{inparaenum}
\item $S$ is a the non-empty, countable set of \newdef{states}.
\item $A$ is a set of \newdef{actions}.
\item $P \colon S\times A \times S \mapsto [0,1]$ is the \newdef{transition relation}.
  It is such that for each $s \in S \setminus S_\shortEve$ and $a \in A$, $\sum_{(s,a,s') \in S \times A \times S} P(s,a,s') = 1$.
\item $r \colon S \mapsto \mathbb{R}$ is the \newdef{reward function}.
\end{inparaenum}
The \emph{optimal value} is then maximal expectation that can be obtained by a strategy.
The optimal values in such systems can be described by a linear program~\cite[Section~9.3]{Puterman94}.
We reuse this linear program to characterise optimal values against a strategy profile $C$.
Details of the proof can be found in the appendix.

\begin{collect}{appendix-randomised}{}{}
We will recall how to compute values of such processes (Lem.~\ref{lem:solution-MDP}), draw the link with our problem(Lem.~\ref{lem:CGS-to-MDP}), and then give equations linking the maximum payoff a coalition can achieve with the given strategy profile (Thm.~\ref{thm:value-stationary}).

\begin{definition}[\cite{Puterman94,ummels2011complexity}]
  A \newdef{Markov decision process}~(MDP) is given by a tuple $\langle S,A,P, r\rangle$, where:
  \begin{itemize}
  \item $S$ is a the non-empty, countable set of \newdef{states}.
  \item $A$ is a set of \newdef{actions}.
  \item $P \colon S\times A \times S \mapsto [0,1]$ is the \newdef{transition relation}.
    It is such that for each $s \in S \setminus S_\shortEve$ and $a \in A$, $\sum_{(s,a,s') \in S \times A \times S} P(s,a,s') = 1$.
  \item $r \colon S \mapsto \mathbb{R}$ is the \newdef{reward function}.
  \end{itemize}
  Given a policy $\sigma \colon S \to A$, the probability of a history $h \in S^\ast$, is $\prod_{0 \le i < |h|} \sum_{a\in A} P(h_i,\sigma(a \mid h_i),h_{i+1})$.
  The \newdef{expected value}~$E^\sigma(r)$ of a policy~$\sigma$ in a Markov reward process is then defined similarly to the expected value of a concurrent game.
  The \newdef{optimal average reward} is the maximum over the policies of the expected value: 
  the optimal average reward of the MDP~$M$ from state $s$ is written $v(M,s)$ and equals $\sup_{\{\sigma \colon S \to A\}} E^\sigma(\MP_r)$.
\end{definition}

In \cite[Section~9.3]{Puterman94} it is shown that the optimal average reward of a MDP is characterised by the following linear programs:
\begin{quote}
  Minimise $\sum_{s\in S} \alpha_s \cdot \gamma_s$ where for all states~$s$, $\alpha_s > 0$ and $\sum_{s\in S} \alpha_s = 1$ subject to:
  \begin{eqnarray}\label{eq:MDP}
    \bigwedge_{a\in \Act} \bigwedge_{s\in S} \gamma_s & \ge \sum_{s' \in S} P(s,a,s') \cdot \gamma_{s'} \nonumber \\
    \land \bigwedge_{a\in \Act} \bigwedge_{s\in S} \gamma_s & \ge r(s) + \sum_{s' \in S} P(s,a,s') \cdot \beta_{s'} - \beta_{s}
  \end{eqnarray}
\end{quote}

Thanks to this linear program we can characterise optimal values against a strategy profile $C$.
\end{collect}

\begin{collect}{appendix-randomised}{}{}
We reformulate the results of \cite[Section~9.3]{Puterman94} in the following lemma.
\begin{lemma}\label{lem:solution-MDP}
  Equation~\eqref{eq:MDP} is fulfilled by $\gamma = (v(M,s))_{s \in S}$ and if it is fulfilled by some vector $\gamma$ then $\gamma_s \ge v(M,s)$ for all states $s$.
\end{lemma}
\begin{proof}
  That equation~\eqref{eq:MDP} if fulfilled by $v(M,s)$ is obvious because it is the solution of the linear program.
  Assume now that some $\gamma$ satisfies $\phi$.

  First notice that $\sum_{s \in S} \gamma_s - v(M,s) \ge 0$ because $\gamma$ is a solution of equation~\ref{eq:MDP} and $v(M,s)$ is the solution that minimises $\sum_{s\in S} \frac{\gamma_s}{|S|}$ (taking $\alpha_s = \frac{1}{|S|}$ for all $s\in S$).

  Towards a contradiction assume there is a state~$t$, such that $\gamma_t < v(M,t)$.
  Given some $\varepsilon$ with $0 < \varepsilon < 1$, we let $\alpha_t = 1 - \varepsilon$ and $\alpha_s = \frac{\varepsilon}{|S|-1}$ for each $s\ne t$.
  We have that the constraints $\alpha_s > 0$ for all states $s$ and $\sum_{s\in S} \alpha_s = 1$ are satisfied.
  Moreover:
  \begin{align*}
    \sum \alpha_s & \cdot \gamma_s - \sum \alpha_s \cdot v(M,s) \\
    & = \sum \alpha_s \cdot (\gamma_s - v(M,s)) \\
    & = \alpha_t \cdot (\gamma_t - v(M,t)) + \sum_{s \ne t} \alpha_s \cdot (\gamma_s - v(M,s)) \\
    & = (\gamma_t - v(M,t)) - \varepsilon \cdot (\gamma_t-v(M,t)) + \sum_{s \ne t} \varepsilon \cdot \frac{(\gamma_s - v(M,s))}{|S|-1} \\
    & = (\gamma_t - v(M,t)) - \varepsilon \cdot \left(\left(1 + \frac{1}{|S|-1}\right)\cdot (\gamma_t-v(M,t)) - \sum_{s \in S} \frac{\gamma_s - v(M,s)}{|S|-1} \right)\\
  \end{align*}
  We write $\delta = \frac{\gamma_t - v(M,t)}{\left(1 + \frac{1}{|S|-1}\right)\cdot (\gamma_t-v(M,t)) - \sum_{s \in S} \frac{\gamma_s - v(M,s)}{|S|-1}}$, so that:
  \[ \sum \alpha_s \cdot \gamma_s - \sum \alpha_s \cdot v(M,s) = (\gamma_t - v(M,t)) \left(1 - \frac{\varepsilon}{\delta} \right)\]

  We have that $\delta$ is greater than $0$ because $\gamma_t - v(M,t) < 0$ by hypothesis, and we showed that $\sum_{s \in S} \gamma_s - v(M,s) \ge 0$ so $\delta > 0$.
  We can then consider some $\varepsilon$ such that $0 < \varepsilon < \min \{ 1 , \delta \}$.
  We obtain that 
  \[ \sum \alpha_s \cdot \gamma_s - \sum \alpha_s \cdot v(M,s) < (\gamma_t - v(M,t)) \]
  Which means that $\sum \alpha_s \cdot \gamma_s < \sum \alpha_s \cdot v(M,s)$ and contradicts that $v(M,s)$ is the solution of $\phi$ that minimises $\sum \alpha_s \cdot \gamma_s$.

  We conclude that if vector $\gamma$ satisfies equation~\eqref{eq:MDP} then $\gamma_s \ge v(M,s)$ for all states $s$.
\end{proof}

Given a coalition $C$, a stationary strategy profile~$\sigma_C$ for this coalition, and a weight function~$w$,
we define a Markov decision process $\MDP(\calG,\sigma_{C},w)$ which represents possible executions of the game when strategies for the coalition $C$ have been fixed.
We define $\MDP(\calG,\sigma_{C},w) = \langle \Stat, \Act^{\Agt\setminus C}, P, w \rangle$ where $P$ is such that the probability of a transition from $s$ to $s'$ with action $a_{-C}$ is:
\[P(s,a_{-C},s') = \sum_{a_C \mid \delta(s,a_C,a_{-C}) = s'} \prod_{A\in C} \sigma_A(a_A \mid h).\]
Given a strategy $\sigma_{-C}$ of the coalition $\Agt \setminus C$ in $\calG$, we define its projection~$\pi(\sigma_{-C})$ in $\calG(\sigma_{-C})$: 
for all histories $h$, $\pi(\sigma_{-C})(a_{-C} \mid h) = \prod_{A \in \Agt \setminus C} \sigma_A(a_A \mid h)$. 

\begin{lemma}\label{lem:CGS-to-MDP}
  The expectation for $\MP_w$ of profile $(\sigma_C,\sigma_{-C})$ in the game~$\calG$ is the same as the expected value of the policy $\pi(\sigma_{-C})$ in $\MDP(\calG,\sigma_C,w)$.
\end{lemma}
\begin{proof}
  Since the weights are given by the same function, it is enough to prove that the probability of an history in $\calG$ knowing $(\sigma_C,\sigma_{-C})$ is the same as the probability in $\MDP(\calG,\sigma_C,w)$ knowing $\sigma_{-C}$.
  We do this by induction. 
  This is obvious for a history~$h$ of length~$1$.
  We now assume that the probability~$p(h)$ of some history~$h$ is the same in $\calG$ knowing $(\sigma_C,\sigma_{-C})$ and in $\MDP(\calG,\sigma_C,w)$ knowing $\sigma_{-C}$.
  Let $s$ be a state, the probability of $h \cdot s$ in $\calG$ knowing $(\sigma_C,\sigma_{-C})$ is:\linebreak[3]
  $p(h) \cdot \sum_{a_\Agt \mid \delta(\last(h),a_\Agt) = s} \prod_{A \in \Agt} \sigma_A(a_A \mid h)$.
  This can be rewritten as: \linebreak[3]
  $p(h) \cdot \sum_{a_\Agt \mid \delta(\last(h),a_\Agt) = s} \pi(\sigma_{-C})(a_{-C} \mid h) \prod_{A\in C} \sigma_A(a_A \mid h)$
  which is equal to the probability of $h\cdot s$ in $\MDP(\calG,\sigma_C,w)$ knowing $\sigma_{-C}$.
  This shows that the expectation for $w$ of profile $(\sigma_C,\sigma_{-C})$ in the game~$\calG$ is the same as the expected payoff for $A$ of the strategy $\pi(\sigma_{-C})$ in the $\MDP(\calG,\sigma_C,w)$.
\end{proof}
\end{collect}

\begin{collect*}{appendix-randomised}{
\begin{theorem}\label{thm:value-stationary}
  Let $\sigma_\Agt$ be a stationary strategy profile, $C$ a coalition, $s$ a state and $w$ a weight function.
  The highest expectation $\Agt \setminus C$ can obtain for $w$ in $\calG$ from $s$: $\sup_{\sigma_{-C}} E^{\sigma_C, \sigma_{-C}}(\MP_w,s)$, 
  is the smallest~$\gamma_s$ component of a solution of the system of inequation:\\
  $\phi(\gamma,\sigma_C,w) := $
  $\bigwedge_{a_{-C}\in \Act^{\Agt \setminus C}} \bigwedge_{s\in S} \gamma_s$ $ \ge \sum_{a_C \in \Act^C}  \gamma_{\delta(s,a_C,a_{-C})} \cdot \prod_{A\in C} \sigma_A(a_A \mid s)$
  $\land \bigwedge_{a_{-C} \in \Act^{\Agt \setminus C}} \bigwedge_{s\in S} \gamma_s$ $\ge w(s) - \beta_{s} + \sum_{a_C \in \Act^C}  \left(\beta_{\delta(s,a_C,a_{-C})} \cdot \prod_{A\in C} \sigma_A(a_A \mid s)\right)$
\end{theorem}
}{}{}{}
\end{collect*}

\begin{collect}{appendix-randomised}{}{}
\begin{proof}
  We proved in Lem.~\ref{lem:CGS-to-MDP} that the expected payoff of a profile in $\calG$ is the same as the expected payoff of its projection in $\MDP(\calG,\sigma_C,w)$.
  Replacing the transition relation in the equation of $\phi$ by its expression in $\MDP(\calG,\sigma_C,w)$, and $r$ by $w$, we obtain by Lem.~\ref{lem:solution-MDP} that the expected payoff of the profile is the minimal solution of:
  \begin{align*}
    \bigwedge_{a_{-C}\in \Act^{-C}} \bigwedge_{s\in S} \gamma_s & \ge \sum_{s' \in S}  \sum_{\{a_C \mid \delta(s,a_C,a_{-C}) = s'\}}  \gamma_{s'} \cdot \prod_{A\in C} \sigma_A(a_A \mid s)    \\
    \land \bigwedge_{a_{-C}\in \Act^{-C}} \bigwedge_{s\in S} \gamma_s & \ge w(s) - \beta_{s} + \sum_{s' \in S}   \sum_{\{a_C \mid \delta(s,a_C,a_{-C}) = s'\}}  \left(\beta_{s'} \cdot \prod_{A\in C} \sigma_A(a_A \mid s) \right)
  \end{align*}
  We notice that $\sum_{s'\in S} \sum_{\{a_C \mid \delta(s,a_C,a_{-C}) = s'\}} f(s,s',a_C)$ is the same as:\linebreak[3]
  $\sum_{a_C \in \Act^C} f(s,\delta(s,a_C,a_{-C}),a_C)$, and rewrite the equation:
  \begin{align*}
    \bigwedge_{a_{-C}\in \Act^{-C}} \bigwedge_{s\in S} \gamma_s & \ge \sum_{a_C \in \Act^C}  \gamma_{\delta(s,a_C,a_{-C})} \cdot \prod_{A\in C} \sigma_A(a_A \mid s)    \\
    \land \bigwedge_{a_{-C} \in \Act^{-C}} \bigwedge_{s\in S} \gamma_s & \ge w(s) - \beta_{s} + \sum_{a_C \in \Act^C}  \left(\beta_{\delta(s,a_C,a_{-C})} \cdot \prod_{A\in C} \sigma_A(a_A \mid s)\right)
  \end{align*}
  Therefore $\sup_{\sigma_{-C}} E^{\sigma_C, \sigma_{-C}}(\MP_w,s)$ equals the minimal $\gamma_s$ that is part of a solution of this equation.
\end{proof}
\end{collect}

\subsubsection{Expressing the existence of robust equilibria}

Putting these results together, we obtain the results. 
Intuitively, $\psi(\gamma,\sigma_\Agt,w_A)$ ensures that $\gamma$ correspond to the payoff for $\sigma_\Agt$ of $A$, and $\phi(\gamma',\sigma_{-C},w_A)$ makes $\gamma'$ correspond to the payoff for a deviation of $\sigma_C$.

\begin{theorem}\label{thm:summary-randomized}
  Let~$\sigma_\Agt$ be a strategy profile.
  \begin{itemize}
  \item
    It is $C$-resilient if, and only if, it satisfies the formula:
    \begin{center}
  \(
  \rho(C,\sigma_\Agt) :=
  \bigwedge_{A\in C} \exists \gamma,\gamma'.\ \left(\psi(\gamma,\sigma_\Agt,w_A) \land \phi(\gamma',\sigma_{-C},w_A) \land (\gamma' \le \gamma)\right) \)
  \end{center}
  \item It is $C,r$-immune if, and only if, it satisfies equation:
    \begin{center}
  \(
  \iota(C,\sigma_\Agt) :=
  \bigwedge_{A\not\in C} \exists \gamma,\gamma'.\ \left(\psi(\gamma,\sigma_\Agt,A) \land \phi(\gamma',\sigma_{-C},-w_A) \land \gamma - r \le -\gamma'\right) \)
    \end{center}
  \item
  There is a robust equilibria if, and only if, the following equation is satisfiable:
  \begin{center}
  \(
  \exists \varsigma \in \mathbb{R}^{\Agt \times \Stat\times \Act}.\ \mu(\varsigma) \land 
  \bigwedge_{C \subseteq \Agt \mid |C| \le k} \rho(C,\varsigma)
  \land 
  \bigwedge_{C \subseteq \Agt \mid |C| \le t} \iota(C,\varsigma)
  \)
    \end{center}
  \end{itemize}
\end{theorem}

\begin{collect}{appendix-randomised}{}{}
\subsection{Proof of Thm.~\ref{thm:summary-randomized}}
\begin{lemma}\label{lem:resilience-equation}
  The strategy profile~$\sigma_\Agt$ is $C$-resilient if, and only if, it satisfies the formula:
  \[  
  \rho(C,\sigma_\Agt) :=
  \bigwedge_{A\in C} \exists \gamma,\gamma'.\ \left(\psi(\gamma,\sigma_\Agt,w_A) \land \phi(\gamma',\sigma_{-C},w_A) \land (\gamma' \le \gamma)\right) \]
\end{lemma}
\begin{proof}
  \fbox{$\Rightarrow$}
  Assume $\sigma_\Agt$ is $C$-resilient, and let $A\in C$.
  Consider $\gamma$ the payoff of $A$ in $\Out_\calG(\sigma_\Agt)$, and $\gamma'$ the highest payoff $C$ can obtain for $A$ against $\sigma_{-C}$: $\gamma' = \sup_{\sigma_{-C}} E^{\sigma_C, \sigma_{-C}}(\MP_{w_A},s)$.
  By Thm.~\ref{thm:payoff-stationary}, $\gamma$ makes $\psi(\gamma,\sigma_\Agt,w_A)$ hold, and by Thm.~\ref{thm:value-stationary}, $\gamma'$ makes $\phi(\gamma',\sigma_{-C},w_A)$ hold.
  Now since $\sigma_\Agt$ is resilient, $C$ cannot improve the payoff of $A$ and therefore $\gamma' \le \gamma$.
  Hence $\left(\psi(\gamma,\sigma_\Agt,w_A) \land \phi(\gamma',\sigma_{-C},w_A) \land (\gamma' \le \gamma)\right)$ is satisfied by this choice of $\gamma$ and $\gamma'$.
  As we can do the same for all $A\in C$, $\rho(C,\sigma_\Agt)$ holds.

  \fbox{$\Leftarrow$}
  The requirement that $\psi(\gamma,\sigma_\Agt,w_A)$ holds, ensures that the payoff for $A$ of $\sigma_\Agt$ is $\gamma$ (see Thm.~\ref{thm:payoff-stationary}).
  The requirement $\phi(\gamma',\sigma_{-C},w_A)$, ensures that the maximal payoff for $A$ that coalition $C$ can obtain against $\sigma_{-C}$ is less than $\gamma'$ (see Thm.~\ref{thm:value-stationary}).
  Finally the requirement $\gamma' \le \gamma$, ensures that $\gamma'$ and therefore the best value $C$ can obtain, is smaller than $\gamma$, hence the coalition $C$ cannot improved the payoff of $A$ by deviating.
  This implies the resilience of $\sigma_\Agt$.
\end{proof}

\begin{lemma}\label{lem:immunity-equation}
  The strategy profile~$\sigma_\Agt$ is $C,r$-immune if, and only if, it satisfies equation:
  \[ 
  \iota(C,\sigma_\Agt) :=
  \bigwedge_{A\not\in C} \exists \gamma,\gamma'.\ \left(\psi(\gamma,\sigma_\Agt,A) \land \phi(\gamma',\sigma_{-C},-w_A) \land \gamma - r \le -\gamma'\right) \]
\end{lemma}
\begin{proof}
  \fbox{$\Rightarrow$}
  Assume $\sigma_\Agt$ is $C,r$-immune, and let $A\not\in C$.
  Consider $\gamma$ the payoff of $A$ in $\Out_\calG(\sigma_\Agt)$.
  By Thm.~\ref{thm:payoff-stationary}, $\psi(\gamma,\sigma_\Agt,A)$ holds.
  Consider also $\gamma'$ the negation of lowest payoff $C$ can obtain for $A$ against $\sigma_{-C}$: $\gamma' = - \inf_{\sigma_{-C}} E^{\sigma_C, \sigma_{-C}}(\MP_{w_A},s)$.
  We use the fact that the optimal values for limit inferior and superior coincide in MDPs~(see for instance~\cite[Thm.~9.1.3]{Puterman94} to get the following
  \begin{align*}
    - \inf_{\sigma_{-C}} E^{\sigma_C, \sigma_{-C}}(\MP_{w_A},s) &= \sup_{\sigma_{-C}} - E^{\sigma_C, \sigma_{-C}}(\MP_{w_A},s) \\
    &= \sup_{\sigma_{-C}} E^{\sigma_C, \sigma_{-C}}(-\MP_{w_A},s) \\
    &= \sup_{\sigma_{-C}} E^{\sigma_C, \sigma_{-C}}\left(\rho \mapsto \limsup_{n \to \infty} \frac{1}{n} \sum_{1\le k\le n} -w_A(\rho_k),s\right) \\
    &= \sup_{\sigma_{-C}} E^{\sigma_C, \sigma_{-C}}(\MP_{-w_A},s) & \text{(consequence of \cite{Puterman94})}\\
  \end{align*}
  By Thm.~\ref{thm:value-stationary}, $\gamma'$ makes $\phi(\gamma',\sigma_{-C},-w_A)$ hold.

  Now since $\sigma_\Agt$ is $C,r$-immune, the deviation of $C$ cannot make the payoff of $A\not\in C$ decrease by more than $r$.
  Therefore $\inf_{\sigma_{-C}} E^{\sigma_C, \sigma_{-C}}(\MP_{w_A},s) \ge \gamma -r$ and $-\gamma' \ge \gamma - r$.

  \fbox{$\Leftarrow$}
  The requirement that $\psi(\gamma,\sigma_\Agt,A)$ holds, ensures that the payoff for $A$ of $\sigma_\Agt$ is $\gamma$ (see Thm.~\ref{thm:payoff-stationary}).
  The requirement $\phi(\gamma',\sigma_{-C},-w_A)$, ensures that $\gamma' \ge \sup_{\sigma_{-C}} E^{\sigma_C, \sigma_{-C}}(\MP_{-w_A},s)$ (see Thm.~\ref{thm:value-stationary}).
  As we so in the proof of implication, $\sup_{\sigma_{-C}} E^{\sigma_C, \sigma_{-C}}(\MP_{-w_A},s) = - \inf_{\sigma_{-C}} E^{\sigma_C, \sigma_{-C}}(\MP_{w_A},s)$, thus $-\gamma' \le \inf_{\sigma_{-C}} E^{\sigma_C, \sigma_{-C}}(\MP_{w_A},s)$.
  Therefore coalition $C$ against $\sigma_{-C}$ cannot make the payoff of $A$ lower than $-\gamma'$, so with the additional constraint that $\gamma - r \le -\gamma'$, the payoff of $A$ cannot be decreased by more than $r$ by a deviation of $C$.
  This implies the immunity of $\sigma_\Agt$.
\end{proof}

\begin{theorem}\label{thm:robust-equation}
  There is a robust equilibria if, and only if, the following equation is satisfiable:
  \[ 
  \exists \tau \in \mathbb{R}^{\Agt \times \Stat\times \Act}.\ \mu(\tau) \land 
  \bigwedge_{C \subseteq \Agt \mid |C| \le k} \rho(C,\tau)
  \land 
  \bigwedge_{C \subseteq \Agt \mid |C| \le t} \iota(C,\tau)
  \]
\end{theorem}
\begin{proof}
  The formula $\mu(\tau)$ that for each $A\in\Agt$ the mapping $s,a \mapsto \tau_{A,s,a}$ described by $\tau$ corresponds to a stationary strategy (see Lem.~\ref{lem:strategy-equation}).
  Formula $\rho(C,\tau)$, says that for each coalition $C$ of size smaller than $k$, the profile described by $\tau$ is $C$-resilient (see Lem.~\ref{lem:resilience-equation}), hence it is $k$-resilient.
  Finally $\iota(C,\tau)$, says that for each coalition $C$ of size smaller than $t$, the profile described by $\tau$ is $(C,r)$-immune (see Lem.~\ref{lem:immunity-equation}), hence it is $(t,r)$-immune.
  This is therefore equivalent to the $(k,t,r)$-robustness of the strategy profile described by $\tau$.
\end{proof}
\end{collect}

\begin{theorem}
  The robustness problem is in \EXPSPACE\ for stationary strategies.
\end{theorem}
\begin{proof}
  By Thm.~\ref{thm:summary-randomized}, the existence of a robust equilibria is equivalent to the satisfiability of a formula in the existential theory of reals.
  This formula can be of exponential size with respect to $k$ and $t$, since a conjunction over coalitions of these size is considered.
  The best known upper bound for the theory of the reals in \PSPACE~\cite{Can88}, which gives the \EXPSPACE~upper bound for our problem.
\end{proof}



\section{Deviator Game}\label{sec:suspect}
We now turn to the case of non-randomised strategies.
In order to obtain simple algorithms for the robustness problem, we use a correspondence with zero-sum two-players game.
Winning strategies has been well studied in computer science and we can make use of existing algorithms.
We present the deviator game, which is a transformation of multiplayer game into a turn-based zero-sum game, such that there are strong links between robust equilibria in the first one and winning strategies in the second one.
This is formalised in Thm.~\ref{thm:dev-correct}.
Note that the proofs of this section are independent from the type of objectives we consider, and the result could be extended beyond mean-payoff objectives.

\begin{mydefinition}[Deviator]
The basic notion we use to solve the robustness problem is that of deviators.
It identifies players that cause the current deviation from the expected outcome.
  A \emph{deviator} from move $\shortAct_\Agt$ to $\shortAct'_\Agt$ is a player $D \in \Agt$ such that $\shortAct_D \ne \shortAct'_D$.
  We write this set of deviators:
  \(
  \dev(\shortAct_\Agt,\shortAct'_\Agt) = \{ A\in \Agt \mid \shortAct_A \ne \shortAct'_A \}.
  \)
  We extend the definition to histories
  and strategies by taking the union of deviator sets, formally $\dev(h,\sigma_\Agt) = \bigcup_{0 \le i < |h|} \dev(\act_i(h), \sigma_\Agt(h_{\le i}))$.
  It naturally extends to plays: if $\rho$ is a play, then $\dev(\rho,\sigma_\Agt) = \bigcup_{i\in\N} \dev(\act_i(\rho), \sigma_\Agt(\rho_{\le i}))$.
\end{mydefinition}

Intuitively, given an play~$\rho$ and a strategy profile~$\sigma_\Agt$, deviators represent the agents that need to change their strategies from $\sigma_\Agt$ in order to obtain the play $\rho$.
The intuition is formalised in the following lemma.

\begin{collect*}{appendix-suspect}{
\begin{lemma}\mylabel{Lem}{lem:deviator-path}
  Let $\rho$ be a play, $\sigma_\Agt$ a strategy profile and $C \subseteq \Agt$ a coalition.
  Coalition $C$ contains $\dev(\rho,\sigma_\Agt)$ if, and only if, there exists $\sigma'_C$ such that $\rho \in \Out_\calG(\rho_0, \sigma'_C, \sigma_{-C})$.
\end{lemma}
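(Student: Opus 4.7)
The plan is to unfold the definitions of $\dev$ and of compatibility, and to argue each implication separately; no clever construction is required, only a careful bookkeeping of which player does what at each step.

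For the easier direction ($\Leftarrow$), I would assume $\sigma'_C$ is such that $\rho \in \Out_\calG(\rho_0, \sigma'_C, \sigma_{-C})$ and show $\dev(\rho, \sigma_\Agt) \subseteq C$. Pick $A \notin C$; since $A$ follows $\sigma_A$ along $\rho$, compatibility gives $(\act_i(\rho))_A = \sigma_A(\rho_{\le i})$ for every $i$, so $A$ does not appear in any $\dev(\act_i(\rho), \sigma_\Agt(\rho_{\le i}))$ and hence not in $\dev(\rho,\sigma_\Agt)$. Contrapositively, every deviator lies in $C$.

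For the harder direction ($\Rightarrow$), I would explicitly build a witness strategy $\sigma'_C$. Define, for each $A \in C$ and each $i \in \N$, $\sigma'_A(\rho_{\le i}) := (\act_i(\rho))_A$, and set $\sigma'_A$ arbitrarily on histories that are not prefixes of $\rho$. The task is then to verify that $\rho \in \Out_\calG(\rho_0, \sigma'_C, \sigma_{-C})$, i.e.\ every prefix of $\rho$ is compatible with $(\sigma'_C, \sigma_{-C})$. I would do this by induction on $i$: the base case $i=0$ is immediate since $\rho_0 = \rho_0$. For the inductive step, assuming $\rho_{\le i}$ is compatible, I split the move $\act_i(\rho)$ coordinate by coordinate. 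For $A \in C$, $(\act_i(\rho))_A = \sigma'_A(\rho_{\le i})$ by construction. For $A \notin C$, the hypothesis $\dev(\rho,\sigma_\Agt)\subseteq C$ forces $(\act_i(\rho))_A = \sigma_A(\rho_{\le i})$, since otherwise $A$ would be a deviator. Hence $\act_i(\rho)$ is exactly the move produced by $(\sigma'_C, \sigma_{-C})$ on $\rho_{\le i}$, and because $\rho$ is a play, $\rho_{i+1} = \Tab(\rho_i, \act_i(\rho))$, closing the induction.

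There is no real obstacle in this proof; the only subtlety is to remember that $\sigma'_C$ must be defined as a total function on histories, which is handled by the arbitrary extension off the prefixes of $\rho$, and to correctly invoke the definition of $\dev$ for the non-coalition players in the inductive step. Everything else is just unfolding notation.
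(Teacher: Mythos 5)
Your proposal is correct and follows essentially the same route as the paper's own proof: the witness strategy $\sigma'_C$ defined by $\sigma'_A(\rho_{\le i}) = (\act_i(\rho))_A$ in the forward direction, and reading off $(\act_i(\rho))_A = \sigma_A(\rho_{\le i})$ for $A \notin C$ from compatibility in the backward direction. The only difference is presentational (you make the induction and the arbitrary extension of $\sigma'_C$ off prefixes of $\rho$ explicit), which does not change the argument.
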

}{}{}{
\begin{proof}
  \fbox{$\Rightarrow$}
  Let $\rho$ be a play and $C$ a coalition which contains $\dev(\rho,\sigma_\Agt)$.
  We define $\sigma'_C$ to be such that for all $i$, $\sigma'_C(\rho_{\le i}) = (\act_i(\rho))_C$.
  We have that for all indices~$i$, $\dev(\rho_{i+1}, \sigma_\Agt(\act_i(\rho))) \subseteq C$.
  Therefore for all agents $A \not\in C$, $\sigma_A(\rho_{\le i}) = (\act_i(\rho))_A$.
  Then $\Tab(\rho_i, \sigma'_C(\rho_{\le i}), \sigma_{-C}(\rho_{\le i})) = \rho_{i+1}$.
  Hence $\rho$ is the outcome of the profile $(\sigma_{-C}, \sigma'_C)$. 

  \medskip

  \fbox{$\Leftarrow$} 
  Let $\sigma_\Agt$ be a strategy profile, $\sigma'_C$ a strategy for coalition $C$, and $\rho \in \Out_\calG(\rho_0,\sigma_{-C},\sigma'_C)$.
  We have for all indices~$i$ that $\act_i(\rho) = (\sigma_{-C}(\rho_{\le i}),\sigma'_C(\rho_{\le i}))$.
  Therefore for all agents~$A\not\in C$, $(\act_i(\rho))_A = \sigma_A(\rho_{\le i})$.
  Then $\dev(\act_i(\rho), \sigma_\Agt(\rho_{\le i})) \subseteq C$.
  Hence $\dev(\rho,\sigma_\Agt)\subseteq C$.
\end{proof}
}\end{collect*}

\subsection{Deviator Game}

We now use the notion of deviators to draw a link between multiplayer games and a two-player game that we will use to solve the robustness problem.
Given a concurrent game structure $\calG$, we define the deviator game~$\devgame$ between two players called \Eve and \Adam. 
Intuitively \Eve needs to play according to an equilibrium, while \Adam tries to find a deviation of a coalition which will profit one of its player or harm one of the others.
The states are in $\Stat' = \Stat \times 2^\Agt$; the second component records the deviators of the current history.
The game starts in $(s_0,\varnothing)$ and then proceeds as follows: from a state~$(s,D)$, \Eve chooses an action profile $\shortAct_\Agt$ and \Adam chooses another one $\shortAct'_\Agt$, then the next state is $(\Tab(s,\shortAct'_\Agt),D \cup \dev(\shortAct_\Agt,\shortAct'_\Agt))$.
In other words, \Adam chooses the move that will apply, but this can be at the price of adding players to the $D$ component when he does not follow the choice of \Eve.
The weights of a state $(s,D)$ in this game are the same than that of $s$ in $\calG$.
The construction of the deviator arena is illustrated in \figurename~\ref{fig:deviator-game}.

\begin{figure}[tb]
  \centering{\scriptsize
  \begin{tikzpicture}[xscale=1,yscale=0.9]
    \everymath{\scriptsize}
    \draw (0,0) node[draw,rounded corners=3mm,minimum size=6mm] (B) {$[0,0],\varnothing$};

    \draw (4,2) node[draw,text width=1.5cm,text centered,rounded corners=3mm] (BA1) {$[1,0]$ \texttt{send(1)} \texttt{close(1)} $\varnothing$};
    \draw (4,-0.7) node[draw,text width=1.5cm,text centered,rounded corners=3mm] (BA2) {$[2,1]$ \texttt{send(2)} $\{ A_2 \}$};
    \draw (4,0.7) node[draw,text width=1.5cm,text centered,rounded corners=3mm] (BA3) {$[1,2]$ \texttt{send(1)} \texttt{close(1)} $\{A_1\}$};
    \draw (2,-1.3) node(BA4) {\dots};

    \draw (8,2) node[draw,right,text width=1.5cm,text centered,rounded corners=3mm] (F) {[0,2] \\$\varnothing$};
    \draw (8,1) node[draw,right,text width=1.5cm,text centered,rounded corners=3mm,minimum size=6mm] (G) {[0,2] \\ $\{A_2\}$};
    \draw (8,0) node[draw,right,text width=1.5cm,text centered,rounded corners=3mm,minimum size=6mm] (H) {[0,2] \\ $\{A_1\}$};
    \draw (8,-1) node[draw,right,text width=1.5cm,text centered,rounded corners=3mm,minimum size=6mm] (I) {[0,2] $\{A_1,A_2\}$};

    \draw[-latex'] (B) -- node[above,sloped] {\texttt{(ch,w),(ch,w)}} (BA1);
    \draw[-latex'] (B) -- node[below,sloped] {\texttt{(ch,w),(ch,ch)}} (BA2);
    \draw[-latex'] (B) -- node[below,sloped] {\texttt{(w,ct),(ch,ct)}} (BA3);
    \draw[-latex',dotted] (B) -- (BA4.170);

    \draw[-latex'] (BA1) -- node[above,sloped] {\texttt{(w,ch),(w,ch)}} (F.180);
    \draw[-latex'] (BA1) -- node[above,sloped] {\texttt{(w,w),(w,ch)}} (G.180);
    \draw[-latex'] (BA1) -- node[above,sloped] {\texttt{(w,ch),(ct,ch)}} (H.170);
    \draw[-latex'] (BA3) -- node[below,sloped] {\texttt{(w,w),(w,w)}} (H.180);
    \draw[-latex'] (BA3) -- node[below,sloped] {\texttt{(w,w),(w,ch)}} (I.180);

    \draw (11,2) node {\dots};
    \draw (11,1.4) node {\dots};
    \draw (11,0) node {\dots};
    \draw (11,-1.4) node {\dots};
  \end{tikzpicture}
  }
  \caption{Part of the deviator game construction for the game of \figurename~\ref{fig:game-from-program}.
    Labels on the edges correspond to the action of \eve and the action of \adam.
    Labels inside the states are the state of the original game and the deviator component.
  }
  \label{fig:deviator-game}
\end{figure}

We now define some transformations between the different objects used in games $\calG$ and $\calD(\calG)$.
We define projections $\projun$,~$\projdeux$ and~$\projact$ from $\Stat'$ to $\Stat$, from $\Stat'$ to $2^\Agt$ and from 
$\Act^\Agt \times \Act^\Agt$ to $\Act^\Agt$ respectively.
They are given by $\projun(s,D) = s$, $\projdeux(s,D)=D$ and $\projact(\shortAct_\Agt,\shortAct'_\Agt) = \shortAct'_\Agt$. 
We~extend these projections to plays in a natural~way, letting
$\projpath(\rho) = \projun(\rho_0) \cdot \projact(\act_0(\rho)) \cdot \projun(\rho_1) \cdot \projact(\act_1(\rho)) \cdots$ and $\projdeux(\rho) = \projdeux(\rho_0) \cdot \projdeux(\rho_1)  \cdots$.
Note that for any play~$\rho$, and any index~$i$, $\projdeux(\rho_i) \subseteq \projdeux(\rho_{i+1})$, therefore $\projdeux(\rho)$ seen as a sequence of sets of coalitions is increasing and bounded by $\Agt$, its limit~$\devlimit(\rho)=\cup_{i\in\mathbb{N}} \projdeux(\rho_i)$ is well defined. 
Moreover to a strategy profile $\sigma_\Pl$ in $\calG$, we can naturally associate a strategy~$\projstrat(\sigma_\Pl)$ for \Eve in $\devgame$ such that for all histories $h$ by $\projstrat(\sigma_\Pl)(h) = \sigma_\Agt(\projpath(h))$.



The following lemma states the correctness of the construction of the deviator game, in the sense that it records the set of deviators in the strategy profile suggested by \adam with respect to the strategy profile suggested by \eve.
\begin{collect*}{appendix-suspect}{
    \begin{lemma}\mylabel{Lem}{prop:correctness-deviator-game}
      Let $\calG$ be a game and $\sigma_\Pl$ be a strategy profile and $\sigma_\shortEve = \projstrat(\sigma_\Pl)$ the associated strategy in the deviator game.
      \begin{enumerate}
      \item If $\rho \in \Out_{\devgame} (\sigma_\shortEve)$, then $\dev(\projpath(\rho),\sigma_\Pl) = \devlimit(\rho)$.  
      \item If $\rho \in \Out_\calG$ and $\rho'= ((\rho_i,\dev(\rho_{\le i},\sigma_\Pl)) \cdot (\sigma_\Pl(\rho_{\le i}), \act_i(\rho)))_{i\in \N}$ then $\rho' \in \Out_\devgame(\sigma_\shortEve)$
      \end{enumerate}
  \end{lemma}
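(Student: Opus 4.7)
The plan is to establish both parts by induction on the length of prefixes, exploiting the fact that the second component of a state in $\devgame$ is built up incrementally in exactly the same way as the set $\dev(h,\sigma_\Pl)$ is built up along a history~$h$.

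For part~(1), I would start from a play $\rho \in \Out_\devgame(\sigma_\shortEve)$ and decompose $\act_i(\rho) = (b_i, b'_i)$ into \eve's and \adam's moves. Compatibility of $\rho$ with $\sigma_\shortEve = \projstrat(\sigma_\Pl)$ forces $b_i = \sigma_\Pl(\projpath(\rho_{\le i}))$, while $b'_i = \projact(\act_i(\rho))$ is by construction the $i$-th move of $\projpath(\rho)$. The transition rule of $\devgame$ then gives $\projdeux(\rho_{i+1}) = \projdeux(\rho_i) \cup \dev(b_i, b'_i)$, which, after using that $\dev$ is symmetric in its two action arguments, is precisely the $i$-th increment in the definition of $\dev(\projpath(\rho), \sigma_\Pl)$. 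A straightforward induction starting from $\projdeux(\rho_0) = \varnothing$ yields $\projdeux(\rho_n) = \dev((\projpath(\rho))_{\le n}, \sigma_\Pl)$ for every $n$, and passing to the limit gives the announced equality.

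For part~(2), I would verify directly that the sequence $\rho'$ defined in the statement is a play in $\devgame$ compatible with $\sigma_\shortEve$. Two checks suffice. First, because $\projpath(\rho'_{\le i}) = \rho_{\le i}$, \eve's component of $\act_i(\rho')$, namely $\sigma_\Pl(\rho_{\le i})$, coincides with $\sigma_\shortEve(\rho'_{\le i}) = \sigma_\Pl(\projpath(\rho'_{\le i}))$, so compatibility holds. Second, I check the transition: the first coordinate of $\rho'_{i+1}$ is $\Tab(\rho_i, \act_i(\rho)) = \rho_{i+1}$ because $\rho \in \Out_\calG$, and the second coordinate is $\dev(\rho_{\le i}, \sigma_\Pl) \cup \dev(\sigma_\Pl(\rho_{\le i}), \act_i(\rho)) = \dev(\rho_{\le i+1}, \sigma_\Pl)$ by the inductive definition of $\dev$ on histories, matching exactly the update rule of $\devgame$.

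I do not anticipate a real obstacle here: the lemma essentially says that $\devgame$ performs the deviator bookkeeping it was designed to perform, and both directions reduce to a single induction combined with the symmetry of $\dev$ in its two action arguments. The only minor care required is to keep indices aligned when translating between plays of $\calG$ and of $\devgame$, which becomes transparent once $\projpath$ and the pairing of \eve's and \adam's actions are spelled out.
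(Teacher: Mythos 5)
Your proposal is correct and follows essentially the same route as the paper's own proof: a single induction on prefix length showing $\projdeux(\rho_{\le i}) = \dev(\projpath(\rho)_{\le i},\sigma_\Pl)$ for part~(1), and a direct inductive verification of the transition rule of $\devgame$ for part~(2). The only (harmless) difference is that you make explicit the symmetry of $\dev$ in its two action arguments, which the paper uses implicitly.
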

    }{}{}{
\begin{proof}[Proof of 1]
  We prove that for all $i$, $\dev(\projpath(\rho)_{\le i}, \sigma_\Pl) = \projdeux(\rho_{\le i})$, which implies the property.
  The property holds for $i=0$, since initially both sets are empty.
  Assume now that it holds for $i\ge 0$.
  \begin{align*}
    &\dev(\projpath(\rho)_{\le i+1}, \sigma_\Pl)\\
    &= \dev(\projpath(\rho)_{\le i},\sigma_\Pl) \cup \dev(\sigma_\Pl(\projpath(\rho)_{\le i}), \projact(\act_{i+1}(\rho))) & \text{(by definition of deviators)}\\
    &= \projdeux(\rho_{\le i}) \cup \dev(\sigma_\Agt(\projact(\rho)_{\le i}), \projact(\act_{i+1}(\rho))) & \text{(by induction hypothesis)}\\
    &= \projdeux(\rho_{\le i}) \cup  \dev(\sigma_\shortEve(\rho_{\le i}), \projact(\act_{i+1}(\rho))) & \text{(by definition of $\sigma_\shortEve$)}\\
    &= \projdeux(\rho_{\le i}) \cup  \dev(\act_{i+1}(\rho)) & \text{(by assumption $\rho \in \Out_{\devgame}(\sigma_\shortEve)$)}\\
    &= \projdeux(\rho_{\le i+1}) & \text{(by construction of $\devgame$)}
  \end{align*}
  Which concludes the induction. 
\end{proof}

\begin{proof}[Proof of 2]
The property is shown by induction. 
It holds for the initial state.
  Assume it is true until index $i$, then
  \begin{align*} 
    \Tab'&(\rho'_i,\sigma_\shortEve(\rho'_{\le i}),\act_{i}(\rho)) \\
    &= \Tab'((\rho_i,\dev(\rho_{\le i},\sigma_\Agt)),\sigma_\shortEve(\rho'_{\le i}),\act_{i}(\rho))  & \text{(by definition of $\rho'$)}\\
    &=(\Tab(\rho_i,\act_{i}(\rho)), \dev(\rho_{\le i},\sigma_\Agt)\cup \dev(\sigma_\shortEve(\rho'_{\le i}), \rho_{i+1})) & \text{(by construction of $\Tab'$)}\\
    &=(\rho_{i+1}, \dev(\rho_{\le i},\sigma_\Agt)\cup \dev(\sigma_\shortEve(\rho'_{\le i}), \rho_{i+1}))  & \text{(since $\rho$ is an outcome of the game)}\\
    &=(\rho_{i+1}, \dev(\rho_{\le i},\sigma_\Agt)\cup \dev(\sigma_\Agt(\rho_{\le i}), \rho_{i+1})) & \text{(by construction of $\sigma_\shortEve$)}\\
    &=(\rho_{i+1}, \dev(\rho_{\le i+1},\sigma_\Agt)) & \text{(by definition of deviators)}\\
    & = \rho'_{i+1}
  \end{align*}
  This shows that $\rho'$ is an outcome of $\sigma_\shortEve$.
\end{proof}
}\end{collect*}


\subsection{Objectives of the deviator game}

We now show how to transform equilibria notions into objectives of the deviator game.
These objectives are defined so that winning strategies correspond to equilibria of the original game.
First, we define an objective~$\Omega(C,A,G)$ in the following lemma, such that a profile which ensures some quantitative goal $G\subseteq \mathbb{R}$ in $\calG$ against coalition $C$ corresponds to a winning strategy in the deviator game.

\begin{collect*}{appendix-suspect}{
\begin{lemma}\label{lem:objective-omega}
  Let $C\subseteq \Agt$ be a coalition, $\sigma_\Agt$ be a strategy profile, $\Goal \subseteq \mathbb{R}$ and $A$ a player.
  We have that for all strategies $\sigma'_C$ for coalition $C$,
  $\payoff_A(\sigma_{-C},\sigma'_C) \in \Goal$ if, and only if, $\projstrat(\sigma_\Agt)$ is winning in $\devgame$ for objective $\Omega(C,A,\Goal) = \{ \rho \mid \delta(\rho) \subseteq C \Rightarrow \payoff_A(\projpath(\rho)) \in \Goal \}$.
\end{lemma}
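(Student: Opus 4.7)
The plan is to prove each direction of the equivalence by a single application of Lemmas~\ref{lem:deviator-path} and~\ref{prop:correctness-deviator-game}, which already establish the correspondence between coalition deviations in $\calG$ and trajectories in $\devgame$ together with the bookkeeping of the deviator component. No new combinatorial argument should be needed.

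For the forward implication, I would fix an arbitrary outcome $\rho \in \Out_\devgame(\projstrat(\sigma_\Agt))$ such that $\delta(\rho) \subseteq C$, and show $\payoff_A(\projpath(\rho)) \in \Goal$. Part~1 of Lemma~\ref{prop:correctness-deviator-game} yields $\dev(\projpath(\rho), \sigma_\Agt) = \delta(\rho) \subseteq C$. Then Lemma~\ref{lem:deviator-path} produces a coalition strategy $\sigma'_C$ such that $\projpath(\rho) \in \Out_\calG(\rho_0, \sigma_{-C}, \sigma'_C)$; since a complete strategy profile has a unique outcome, $\projpath(\rho)$ is precisely that outcome, so $\payoff_A(\projpath(\rho)) = \payoff_A(\sigma_{-C}, \sigma'_C)$, and the hypothesis gives membership in $\Goal$.

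For the reverse implication, I would pick an arbitrary $\sigma'_C$ and let $\rho$ denote the unique outcome of $(\sigma_{-C}, \sigma'_C)$ in $\calG$. Lemma~\ref{lem:deviator-path} in the easy direction gives $\dev(\rho, \sigma_\Agt) \subseteq C$, and part~2 of Lemma~\ref{prop:correctness-deviator-game} lifts $\rho$ to an outcome $\rho'$ of $\projstrat(\sigma_\Agt)$ in $\devgame$ with $\projpath(\rho') = \rho$ and $\projdeux(\rho'_i) = \dev(\rho_{\le i}, \sigma_\Agt)$, hence $\delta(\rho') = \dev(\rho, \sigma_\Agt) \subseteq C$. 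The winning hypothesis on $\projstrat(\sigma_\Agt)$ for $\Omega(C, A, \Goal)$ then yields $\payoff_A(\projpath(\rho')) = \payoff_A(\rho) = \payoff_A(\sigma_{-C}, \sigma'_C) \in \Goal$, as required.

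The only real obstacle is aligning notations between $\calG$ and $\devgame$ and checking that the second component of the constructed lifted play projects correctly to the deviator set; but that matching is exactly what Lemma~\ref{prop:correctness-deviator-game} was set up to deliver, so the proof reduces to two short applications of existing results together with the uniqueness of the outcome of a complete profile.
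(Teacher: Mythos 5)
Your proposal is correct and follows essentially the same route as the paper's proof: both directions are obtained by combining Lemma~\ref{lem:deviator-path} with the two parts of Lemma~\ref{prop:correctness-deviator-game}, plus uniqueness of the outcome of a complete profile. The only cosmetic difference is that in the forward direction the paper invokes Lemma~\ref{lem:deviator-path} for the coalition $\devlimit(\rho)$ and then explicitly pads the resulting strategy to all of $C$, whereas you apply the lemma directly to $C$; both are valid instantiations of the same statement.
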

}{}{}{
\begin{proof}
  \fbox{$\Rightarrow$}
  Let $\rho$ be an outcome of $\sigma_\shortEve=\projstrat(\sigma_\Agt)$.
  By Lem.~\ref{prop:correctness-deviator-game}, we have that $\devlimit(\rho) = \dev(\projpath(\rho),\sigma_\Agt)$.
  By Lem.~\ref{lem:deviator-path}, $\projpath(\rho)$ is the outcome of $(\sigma_{-\devlimit(\rho)},\sigma'_{\devlimit(\rho)})$ for some $\sigma'_{\devlimit(\rho)}$.
  If $\devlimit(\rho) \subseteq C$, then $\payoff_A(\projpath(\rho)) = \payoff_A(\sigma_{-C},\sigma_{C\setminus\devlimit(\rho)}, \sigma'_{\devlimit(\rho)}) = \payoff_A(\sigma_{-C},\sigma''_{C})$ where $\sigma''_A = \sigma'_A$ if $A \in \devlimit(\rho)$ and $\sigma_A$ otherwise.
  By hypothesis, this payoff belongs to $\Goal$.
  This holds for all outcomes~$\rho$ of $\sigma_\shortEve$, thus $\sigma_\shortEve$ is a winning strategy for $\Omega(C,A,\Goal)$.

  \medskip
  \fbox{$\Leftarrow$}
  Assume $\sigma_\shortEve = \projstrat(\sigma_\Agt)$ is a winning strategy in \devgame for $\Omega(C,A,\Goal)$.
  Let $\sigma'_C$ be a strategy for $C$ and $\rho$ the outcome of $(\sigma'_{C},\sigma_{-{C}})$.
  By Lem.~\ref{lem:deviator-path}, $\dev(\rho,\sigma_\Agt) \subseteq C$.
  By Lem.~\ref{prop:correctness-deviator-game}, $\rho'= (\rho_j,\dev(\rho_{\le j},\sigma_\Agt))_{j\in \N}$ is an outcome of $\sigma_\shortEve$.
  We have that $\devlimit(\rho') = \dev(\rho,\sigma_\Agt) \subseteq C$.
  Since $\sigma_\shortEve$ is winning, $\rho$ is such that $\payoff_A(\projpath(\rho)) \in \Goal$.
  Since $\payoff_{A}(\projun(\rho')) = \payoff_{A}(\rho)$, this shows that for all strategies $\sigma'_C$, $\payoff_A(\sigma_{-C},\sigma'_C) \in \Goal$ 
\end{proof}
}\end{collect*}

This lemma makes it easy to characterise the different kinds of equilibria, using objectives in $\devgame$.
For instance, we define a \newdef{resilience objective} where if there are more than $k$ deviators then \eve has nothing to do; if there are exactly $k$ deviators then she has to show that none of them gain anything; and if there are less than $k$ then no player at all should gain anything.
This is because if a new player joins the coalition, its size remains smaller or equal to $k$.
Similar characterisations for immune and robust equilibria lead to the following theorem.

\begin{theorem}\label{thm:dev-correct}
  Let \calG be a concurrent game, $\sigma_\Agt$ a strategy profile in \calG, $p = \payoff(\Out(\sigma_\Agt))$ the payoff profile of $\sigma_\Agt$, $k$ and $t$ integers, and $r$ a rational.
  \begin{itemize}
  \item
    The strategy profile $\sigma_\Agt$ is $k$-resilient if, and only if, strategy~$\projstrat(\sigma_\Agt)$ is winning in \devgame for the \emph{resilience objective} $\calRe(k,p)$ where
 $\calRe(k,p)$ is defined by: 
    $\calRe(k,p) = $ $ \{ \rho \mid ~ |\devlimit(\rho)| > k \} $  $\cup \{ \rho \mid  ~ |\devlimit(\rho)| = k \land \forall A \in \devlimit(\rho).\ \payoff_{A}(\projpath(\rho)) \le p(A)\}$ $\cup \{ \rho \mid  ~ |\devlimit(\rho)| < k \land \forall A \in \Agt.\ \payoff_{A}(\projpath(\rho)) \le p(A) \} $
  \item
    The strategy profile $\sigma_\Agt$ is $(t,r)$-immune if, and only if, strategy~$\projstrat(\sigma_\Agt)$ is winning for the \newdef{immunity objective} $\calI(t,r,p)$ 
    $\calI(t,r,p)$ is defined by:
    $\calI(t,r,p) =$ $\{ \rho \mid |\devlimit(\rho)| > t \} 
      \cup   \{ \rho \mid  ~ \forall A \in \Agt \setminus \devlimit(\rho).\  p(A) - r \le \payoff_{A}(\projun(\rho)) \} $
  \item
    The strategy profile~$\sigma_\Agt$ is a $(k,t,r)$-robust profile in $\calG$ if, and only if, $\projstrat(\sigma_\Agt)$ is winning for the \emph{robustness objective} $\calR(k,t,r,p)= \calRe(k,p) \cap \calI(t,r,p)$.
  \end{itemize}
\end{theorem}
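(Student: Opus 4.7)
The plan is to treat each of the three bullets separately, observing that the robustness clause is immediate from the first two since $\calR(k,t,r,p) = \calRe(k,p) \cap \calI(t,r,p)$ and winning for an intersection is equivalent to winning for both factors. The other two equivalences are powered entirely by Lem.~\ref{lem:deviator-path} and Lem.~\ref{prop:correctness-deviator-game}, which let us convert freely between outcomes of deviations from $\sigma_\Agt$ in $\calG$ and outcomes of $\projstrat(\sigma_\Agt)$ in $\devgame$.

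For the resilience equivalence, start with the forward direction: pick any $\rho \in \Out_{\devgame}(\projstrat(\sigma_\Agt))$ and let $D = \devlimit(\rho)$. By part~(1) of Lem.~\ref{prop:correctness-deviator-game}, $D = \dev(\projpath(\rho),\sigma_\Agt)$. If $|D|>k$ the first clause of $\calRe(k,p)$ already holds. Otherwise, for each player $A$ whose payoff the relevant clause asks to bound, extend $D \cup \{A\}$ to a set $C \subseteq \Agt$ of size exactly $k$; Lem.~\ref{lem:deviator-path} then supplies $\sigma'_C$ with $\projpath(\rho) \in \Out_\calG(\sigma_{-C},\sigma'_C)$, and $k$-resilience applied to $C$ delivers $\payoff_A(\projpath(\rho)) \le p(A)$. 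Conversely, given $C$ of size $k$, $\sigma'_C$, and $A \in C$, let $\rho$ be the outcome of $(\sigma_{-C},\sigma'_C)$ in $\calG$ and $\rho'$ its lift to $\devgame$ via part~(2) of Lem.~\ref{prop:correctness-deviator-game}. Then $D:=\devlimit(\rho') \subseteq C$, so $|D| \le k$: either $|D|=k$, forcing $D=C \ni A$ and invoking the second clause, or $|D|<k$ and the third clause fires for every player including $A$. Either way, $\payoff_A(\rho)\le p(A)$.

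The immunity equivalence follows the same template, with $p(A)-r \le \payoff_A$ replacing $\payoff_A \le p(A)$, ``$A \notin D$'' replacing ``$A \in D$'', and $\calI(t,r,p)$ (only two clauses) replacing $\calRe(k,p)$. The only adjustment is that in the forward direction, having fixed $A \notin D$ with $|D| \le t$, one extends $D$ to a coalition $C$ of size $t$ that still \emph{excludes} $A$, which is possible whenever $|\Agt| \ge t+1$; the degenerate regimes $|\Agt| \le t$ (respectively $|\Agt| \le k$) are vacuous on both sides of the equivalence and can be dispatched separately.

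The conceptual hurdle is recognising why $\calRe(k,p)$ splits into three clauses rather than one, and $\calI(t,r,p)$ into two. The point is that a $k$-resilient profile must also block deviations by coalitions of every size strictly less than $k$, since any such smaller deviator set can be padded with bystanders into a full-size $k$-coalition whose members are then entitled to no improvement; the objectives are engineered so that both directions of the equivalence absorb this padding argument cleanly. Beyond this observation the proof is routine bookkeeping around the two preceding lemmas.
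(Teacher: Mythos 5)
Your proof is correct and follows essentially the same route as the paper: the paper factors the argument through an intermediate objective $\Omega(C,A,\Goal)$ (Lem.~\ref{lem:objective-omega}) and then performs exactly your case analysis on $|\devlimit(\rho)|$, with the same padding of the deviator set to a coalition of size $k$ (resp.\ of size $t$ avoiding $A$) in Lemmas~\ref{lem:obj-resilience} and~\ref{lem:obj-immunity}, while you inline that intermediate step and invoke Lemmas~\ref{lem:deviator-path} and~\ref{prop:correctness-deviator-game} directly. The difference is purely organizational, so nothing further is needed.
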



\begin{collect}{appendix-suspect}{}{}
\subsection{Proof of Thm.~\ref{thm:dev-correct}}\label{sec:proof}
The proof of the theorem relies on the two following lemmas. 
The first one shows the correctness of the resilience objective. 
The second lemma shows the correctness of the immunity objective, its proof follows the same ideas than the first and can be found in the appendix.
\begin{lemma} \label{lem:obj-resilience}{}
  Let \calG be a concurrent game and $\sigma_\Agt$ a strategy profile in \calG.
  The strategy profile $\sigma_\Agt$ is $k$-resilient if, and only if, strategy~$\projstrat(\sigma_\Agt)$ is winning in \devgame for objective $\calRe(k,p)$ where $p = \payoff(\sigma_\Agt)$.
\end{lemma}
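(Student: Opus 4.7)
The plan is to prove both directions by shuttling outcomes between~$\calG$ and~$\devgame$ using Lem.~\ref{lem:deviator-path} (which identifies plays reachable by coalition deviations with plays whose deviator set is contained in the coalition) and Lem.~\ref{prop:correctness-deviator-game} (which matches outcomes of $\sigma_\Agt$ in~$\calG$ with outcomes of $\projstrat(\sigma_\Agt)$ in~$\devgame$, showing that $\devlimit$ records exactly the deviators).

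\textbf{Direction $\Rightarrow$ (resilience implies winning).}
Assume $\sigma_\Agt$ is $k$-resilient, write $\sigma_\shortEve = \projstrat(\sigma_\Agt)$, and let $\rho$ be any outcome of $\sigma_\shortEve$ in $\devgame$. Let $D = \devlimit(\rho)$. If $|D| > k$, then $\rho \in \calRe(k,p)$ by the first clause. Otherwise, by Lem.~\ref{prop:correctness-deviator-game}(1), $D = \dev(\projpath(\rho),\sigma_\Agt)$, so Lem.~\ref{lem:deviator-path} gives some $\tau_D$ with $\projpath(\rho) \in \Out_\calG(\sigma_{-D},\tau_D)$. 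If $|D| = k$, applying the definition of $D$-resilience to each $A\in D$ yields $\payoff_A(\projpath(\rho)) \le p(A)$, placing $\rho$ in the second clause. If $|D| < k$, fix any $A\in\Agt$ and pick a coalition $C\supseteq D\cup\{A\}$ with $|C|=k$; since $D\subseteq C$, Lem.~\ref{lem:deviator-path} again produces $\tau'_C$ with $\projpath(\rho) \in \Out_\calG(\sigma_{-C},\tau'_C)$, and $C$-resilience at the player $A\in C$ gives $\payoff_A(\projpath(\rho)) \le p(A)$, placing $\rho$ in the third clause.

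\textbf{Direction $\Leftarrow$ (winning implies resilience).}
Assume $\sigma_\shortEve$ wins $\calRe(k,p)$. Pick any coalition $C$ with $|C|=k$, any deviation $\sigma'_C$, and let $\rho$ be the unique outcome of $(\sigma_{-C},\sigma'_C)$ in~$\calG$. By Lem.~\ref{lem:deviator-path}, $\dev(\rho,\sigma_\Agt)\subseteq C$. The lift $\rho' = ((\rho_i,\dev(\rho_{\le i},\sigma_\Agt)) \cdot (\sigma_\Agt(\rho_{\le i}),\act_i(\rho)))_{i\in\N}$ belongs to $\Out_\devgame(\sigma_\shortEve)$ by Lem.~\ref{prop:correctness-deviator-game}(2), with $\devlimit(\rho') = \dev(\rho,\sigma_\Agt)\subseteq C$ of size at most $k$, and $\projpath(\rho') = \rho$. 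The winning assumption $\rho'\in\calRe(k,p)$ therefore falls into the second or third clause: in either case, for every $A\in C$ (which includes $\devlimit(\rho')$), we get $\payoff_A(\rho) \le p(A)$, which is exactly the defining inequality for $C$-resilience. As $C$ ranges over all coalitions of size $k$, this yields $k$-resilience of $\sigma_\Agt$.

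\textbf{Main obstacle.}
The genuinely delicate point is the case $|\devlimit(\rho)| < k$ in the forward direction: the hypothesis gives $C$-resilience only for coalitions of size exactly $k$, but the play exhibits strictly fewer deviators, so to conclude $\payoff_A \le p(A)$ for arbitrary $A$ one must inflate the deviator set to a coalition of size $k$ and re-apply Lem.~\ref{lem:deviator-path} to exhibit an actual deviation witness in~$\calG$. Once this ``padding'' move is set up, the rest is bookkeeping through the two correspondence lemmas.
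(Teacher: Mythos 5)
Your proof is correct and follows essentially the same route as the paper: the same case split on $|\devlimit(\rho)|$, the same ``padding'' of a small deviator set to a coalition of admissible size in the $|\devlimit(\rho)|<k$ case, and the same lifting of outcomes between $\calG$ and $\devgame$ via Lem.~\ref{lem:deviator-path} and Lem.~\ref{prop:correctness-deviator-game}. The only difference is organizational: the paper factors the back-and-forth translation through the intermediate objectives $\Omega(C,A,\Goal)$ of Lem.~\ref{lem:objective-omega}, whereas you inline that translation directly, which changes nothing of substance.
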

\begin{proof}
  By Lem.~\ref{lem:objective-omega}, $\sigma_\Agt$ is $k$-resilient if, and only if, for each coalition $C$ of size smaller than $k$, and each player $A$ in $C$, $\projstrat(\sigma_\Agt)$ is winning for $\Omega(C,A,\rbrack-\infty,\payoff_A(\sigma_\Agt)\rbrack)$.
  We will thus in fact show that for each coalition $C$ of size smaller than $k$, and each player $A$ in $C$, $\projstrat(\sigma_\Agt)$ is winning for $\Omega(C,A,\rbrack-\infty,\payoff_A(\sigma_\Agt)\rbrack)$ if, and only if, $\projstrat(\sigma_\Agt)$ is winning for $\calRe(k,p)$.

  \fbox{$\Rightarrow$}
  Let $\rho$ be an outcome of $\projstrat(\sigma_\Agt)$.
  \begin{itemize}
  \item If $|\devlimit(\rho)| > k$, then $\rho$ is in $\calRe(k,p)$ by definition.
  \item If $|\devlimit(\rho)| = k$, then for all $A\in \devlimit(\rho)$, $\payoff_A(\projpath(\rho)) \in \rbrack-\infty,p(A)\rbrack$ because $\projstrat(\sigma_\Agt)$ is winning for $\Omega(\devlimit(\rho),A,\rbrack-\infty,p(A)\rbrack)$.
    Therefore $\rho$ is in $\calRe(k,p)$.
  \item If $|\devlimit(\rho)| < k$, then for all $A\in \Agt$, $C= \devlimit(\rho)\cup \{A\}$ is a coalition of size smaller than $k$, and $\payoff_A(\projpath(\rho)) \in \rbrack-\infty,p(A)\rbrack$ because $\projstrat(\sigma_\Agt)$ is winning for $\Omega(C,A,\rbrack-\infty,p(A)\rbrack)$.
    Therefore $\rho$ is in $\calRe(k,p)$.
  \end{itemize}
  This holds for all outcomes $\rho$ of $\projstrat(\sigma_\Agt)$ and shows that $\projstrat(\sigma_\Agt)$ is winning for $\calRe(k,p)$.

  \medskip
  \fbox{$\Leftarrow$}
  We now show that $\projstrat(\sigma_\Agt)$ is winning for $\Omega(C,A,\rbrack-\infty,p(A)\rbrack)$ for each coalition $C$ of size smaller or equal to $k$ and player $A$ in $C$.
  Let $\rho$ be an outcome of $\projstrat(\sigma_\Agt)$.
  Let $p$ be such that strategy $\projstrat(\sigma_\Agt)$ is winning for $\calRe(k,p)$.
  We have $\rho \in \calRe(k,p)$.
  We show that $\rho$ belongs to $\Omega(C,A,\rbrack -\infty, p(A)\rbrack)$:
  \begin{itemize}
  \item If $\devlimit(\rho) \not\subseteq C$, then $\rho \in \Omega(C,A,\rbrack -\infty, p(A)\rbrack)$ by definition.
  \item If $\devlimit(\rho) \subseteq C$ and $|\devlimit(\rho)| = k$, then $\dev(\rho) = C$.
    Since $\rho \in \calRe(k,p)$, for all $A\in C$, $\payoff_A(\rho) \le p(A)$ and therefore $\payoff_A(\rho) \in \rbrack -\infty, p(A) \rbrack$.
    Hence $\rho \in \Omega(C,A,\rbrack -\infty, p(A)\rbrack)$.
  \item If $\devlimit(\rho) \subseteq C$ and $|\devlimit(\rho)| < k$, then since $\rho \in \calRe(k,p)$, for all $A\in \Agt$, $\payoff_A(\rho) \le p(A)$.
    Therefore $\rho \in \Omega(C,A,\rbrack -\infty, p(A)\rbrack)$.
  \end{itemize}
  This holds for all outcomes $\rho$ of $\projstrat(\sigma_\Agt)$ and shows it is winning for $\Omega(C,A,\rbrack-\infty,p(A)\rbrack)$ for each coalition $C$ and player $A$ in $C$, which shows that $\sigma_\Agt$ is $k$-resilient.
\end{proof}


\begin{lemma}\label{lem:obj-immunity}
  Let \calG be a concurrent game and $\sigma_\Agt$ a strategy profile in \calG.
  The strategy profile $\sigma_\Agt$ is $(t,r)$-immune if, and only if, strategy~$\projstrat(\sigma_\Agt)$ is winning for objective $\calI(t,r,p)$ where $p = \payoff(\sigma_\Agt)$.
\end{lemma}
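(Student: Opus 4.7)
The plan is to mirror the proof of Lemma~\ref{lem:obj-resilience}, replacing the resilience interval $\rbrack-\infty, p(A)\rbrack$ with $\lbrack p(A)-r, +\infty\lbrack$ and swapping the role of ``deviators'' and ``non-deviators''. The key tool is again Lemma~\ref{lem:objective-omega}: by definition, $\sigma_\Agt$ is $(t,r)$-immune if and only if, for every coalition $C$ with $|C|=t$ and every player $A\in\Agt\setminus C$, the profile $\projstrat(\sigma_\Agt)$ is winning in $\devgame$ for $\Omega(C,A,\lbrack p(A)-r,+\infty\lbrack)$. So it suffices to show that this family of $\Omega$-objectives is satisfied for all such $C,A$ exactly when $\projstrat(\sigma_\Agt)$ wins $\calI(t,r,p)$.

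For the forward direction ($\Rightarrow$), I would fix an outcome $\rho$ of $\projstrat(\sigma_\Agt)$ and distinguish whether $|\devlimit(\rho)|>t$ (in which case $\rho\in\calI(t,r,p)$ trivially) or $|\devlimit(\rho)|\le t$. In the latter case, for each $A\in\Agt\setminus\devlimit(\rho)$ one extends $\devlimit(\rho)$ into a coalition $C$ of size $t$ that avoids $A$ (possible whenever $|\Agt|\ge t+1$; the edge case $|\Agt|=t$ is handled by the vacuous condition on $\Agt\setminus\devlimit(\rho)$). Since $\devlimit(\rho)\subseteq C$ and the winning condition $\Omega(C,A,\lbrack p(A)-r,+\infty\lbrack)$ holds by hypothesis, one gets $\payoff_A(\projpath(\rho))\ge p(A)-r$, and so $\rho\in\calI(t,r,p)$.

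For the converse direction ($\Leftarrow$), assuming $\projstrat(\sigma_\Agt)$ wins $\calI(t,r,p)$, I would pick any coalition $C$ of size $t$, any $A\in\Agt\setminus C$, and any strategy $\sigma'_C$. Let $\rho$ be the outcome of $(\sigma_{-C},\sigma'_C)$ in $\calG$. By Lemma~\ref{lem:deviator-path}, $\dev(\rho,\sigma_\Agt)\subseteq C$, and by Lemma~\ref{prop:correctness-deviator-game} the lifted play $\rho'=(\rho_j,\dev(\rho_{\le j},\sigma_\Agt))_{j\in\N}$ is an outcome of $\projstrat(\sigma_\Agt)$ with $\devlimit(\rho')=\dev(\rho,\sigma_\Agt)\subseteq C$. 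Since $|\devlimit(\rho')|\le t$ and $A\notin C\supseteq\devlimit(\rho')$, the winning condition forces $\payoff_A(\projun(\rho'))\ge p(A)-r$, which is exactly $\payoff_A(\sigma_{-C},\sigma'_C)\ge p(A)-r$.

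No step here looks truly hard; the only minor subtlety is the book-keeping in the forward direction when extending $\devlimit(\rho)$ to a coalition of size exactly $t$, and more generally making sure the quantifier structure of $(t,r)$-immunity (``for all $C$ of size $t$, for all $A\notin C$'') is matched by the ``for all $A\in\Agt\setminus\devlimit(\rho)$'' clause of $\calI(t,r,p)$. The proof is then a direct transcription of the resilience argument with the immunity inequality substituted in.
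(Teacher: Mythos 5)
Your proof is correct and follows essentially the same route as the paper's: reduce to the family of objectives $\Omega(C,A,\lbrack p(A)-r,+\infty\lbrack)$ via Lemma~\ref{lem:objective-omega}, then case-split on whether $|\devlimit(\rho)|$ exceeds $t$ in the forward direction and lift the deviating outcome into $\devgame$ in the converse direction. The only cosmetic differences are that you work with coalitions of size exactly $t$ (hence your extension of $\devlimit(\rho)$ to a size-$t$ coalition avoiding $A$), whereas the paper simply takes $C=\devlimit(\rho)$ and quantifies over coalitions of size at most $t$, and that your converse direction re-derives the relevant half of Lemma~\ref{lem:objective-omega} inline rather than citing it; neither affects correctness.
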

\begin{proof}
  By Lem.~\ref{lem:objective-omega}, $\sigma_\Agt$ is $(t,r)$-immune if, and only if, for each coalition $C$ of size smaller than $t$, and each player $A$ not in $C$, $\projstrat(\sigma_\Agt)$ is winning for $\Omega(C,A,\lbrack \payoff_A(\sigma_\Agt) - r,+\infty\lbrack)$.
  We will thus in fact show that for each coalition $C$ of size smaller than $t$, and each player $A$ not in $C$, $\projstrat(\sigma_\Agt)$ is winning for $\Omega(C,A,\lbrack \payoff_A(\sigma_\Agt) -r ,+\infty\lbrack)$ if, and only if, $\projstrat(\sigma_\Agt)$ is winning for $\calRe(t,r,p)$.

  \fbox{$\Rightarrow$}
  Let $\rho$ be an outcome of $\projstrat(\sigma_\Agt)$.
  \begin{itemize}
  \item If $|\devlimit(\rho)| > t$, then $\rho$ is in $\calI(t,r,p)$ by definition.
  \item If $|\devlimit(\rho)| \le t$, then $C = \devlimit(\rho)$ is a coalition of size smaller than $t$.
    As a consequence, for all $A \not\in \devlimit(\rho)$, $\rho$ is winning for $\Omega(C,A,\lbrack p_A -r , +\infty \lbrack)$.
    By definition of $\Omega$, we have $\payoff_A(\rho) \ge p_A - r$.
    Thus $\rho$ is in $\calI(t,r,p)$.
  \end{itemize}

  \fbox{$\Leftarrow$}
  We now show that $\projstrat(\sigma_\Agt)$ is winning for $\Omega(C,A,\lbrack \payoff_A(\sigma_\Agt) -r ,+\infty\lbrack)$ for each coalition $C$ of size smaller than $t$ and player $A$ not in $C$.  
  Let $\rho$ be an outcome of $\projstrat(\sigma_\Agt)$.
  Let $p$ be such that strategy $\projstrat(\sigma_\Agt)$ is winning for $\calI(t,r,p)$.
  We have $\rho \in \calI(t,r,p)$.
  We show that $\rho$ belongs to $\Omega(C,A,\lbrack p(A)-r, +\infty \lbrack)$:
  \begin{itemize}
  \item If $\devlimit(\rho) \not\subseteq C$, then $\rho \in \Omega(C,A,\lbrack p(A)-r, +\infty \lbrack)$ by definition.
  \item If $\devlimit(\rho) \subseteq C$, then since $\rho \in \calI(t,r,p)$, for all $A \not\in C$, $p(A) -r \le \payoff_A(\projpath(\rho))$.
    Therefore $\payoff_A(\rho) \in \lbrack p(A)-r, +\infty \lbrack$ and $\rho \in \Omega(C,A,\lbrack p(A)-r, +\infty \lbrack)$.
  \end{itemize}
  This holds for all outcomes $\rho$ of $\projstrat(\sigma_\Agt)$ and shows it is winning for $\Omega(C,A,\lbrack p(A)-r, +\infty\lbrack)$ for each coalition $C$ and player $A$ in $C$, which shows that $\sigma_\Agt$ is $(t,r)$-immune.
\end{proof}


\begin{lemma}
  Let \calG be a concurrent game and $\sigma_\Agt$ a strategy profile in \calG.
  The strategy profile~$\sigma_\Agt$ is a $(k,t,r)$-robust profile in $\calG$ if, and only if, the associated strategy of \eve is winning for the objective $\calR(k,t,r,\Out(\sigma_\Agt)) = \calRe(k,p) \cap \calI(t,r,p)$ where $p = \payoff(\sigma_\Agt)$.
  \end{lemma}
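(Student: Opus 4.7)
The plan is to derive this lemma as an almost immediate corollary of the two preceding lemmas (Lem.~\ref{lem:obj-resilience} and Lem.~\ref{lem:obj-immunity}), together with a basic observation about winning strategies for an intersection of objectives. The statement is purely a combination of the resilience and immunity characterizations, so no new reasoning about deviators or outcomes of $\devgame$ should be needed.

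First I would unfold the definition of a $(k,t,r)$-robust equilibrium: by Def.~\ref{...}, $\sigma_\Agt$ is $(k,t,r)$-robust if, and only if, $\sigma_\Agt$ is both $k$-resilient and $(t,r)$-immune. By Lem.~\ref{lem:obj-resilience}, $k$-resilience is equivalent to $\projstrat(\sigma_\Agt)$ being winning in $\devgame$ for $\calRe(k,p)$, with $p=\payoff(\sigma_\Agt)$. By Lem.~\ref{lem:obj-immunity}, $(t,r)$-immunity is equivalent to $\projstrat(\sigma_\Agt)$ being winning in $\devgame$ for $\calI(t,r,p)$ with the same $p$.

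Next I would invoke the elementary fact that, for any strategy $\sigma_\shortEve$ in a two-player game and any two objectives $\Omega_1, \Omega_2$, the strategy $\sigma_\shortEve$ is winning for $\Omega_1 \cap \Omega_2$ if, and only if, every outcome of $\sigma_\shortEve$ belongs to both $\Omega_1$ and $\Omega_2$, which is exactly saying that $\sigma_\shortEve$ is winning for $\Omega_1$ and winning for $\Omega_2$. Applying this to $\Omega_1 = \calRe(k,p)$ and $\Omega_2 = \calI(t,r,p)$ combines the two previous equivalences into the desired one: $\projstrat(\sigma_\Agt)$ is winning for $\calR(k,t,r,p) = \calRe(k,p) \cap \calI(t,r,p)$ iff it is winning for both $\calRe(k,p)$ and $\calI(t,r,p)$, iff $\sigma_\Agt$ is both $k$-resilient and $(t,r)$-immune, iff $\sigma_\Agt$ is $(k,t,r)$-robust.

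There is no real obstacle here; the proof is essentially one line. The only thing worth mentioning explicitly is that the two earlier lemmas use the \emph{same} payoff vector $p = \payoff(\sigma_\Agt)$, so we do not need to juggle different payoff profiles when intersecting the objectives — the resilience and immunity objectives are defined against the same reference payoff of the profile $\sigma_\Agt$, which is exactly what makes the conjunction correspond to the single objective $\calR(k,t,r,p)$.
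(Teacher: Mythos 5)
Your proof is correct and follows exactly the same route as the paper: unfold the definition of $(k,t,r)$-robustness into $k$-resilience plus $(t,r)$-immunity, apply Lemmas~\ref{lem:obj-resilience} and~\ref{lem:obj-immunity}, and observe that winning for the intersection $\calRe(k,p)\cap\calI(t,r,p)$ is equivalent to winning for both objectives separately (with the same reference payoff $p=\payoff(\sigma_\Agt)$). Nothing is missing.
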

\begin{proof}
  This is a simple consequence of Lem.~\ref{lem:obj-resilience} and Lem.~\ref{lem:obj-immunity}. 
  Let $\sigma_\Agt$ be a $(k,t,r)$-robust strategy profile.
  It is $k$-resilient, so $\projstrat(\sigma_\Agt)$ is winning the resilience objective.
  It is also $(t,r)$-immune, so $\projstrat(\sigma_\Agt)$ is winning the immunity objective.
  Therefore any outcome of $\sigma_\Agt$ is in the intersection, and $\sigma_\Agt$ ensures the robustness objective.
  
  In the other direction, assume $\projstrat(\sigma_\Agt)$ wins the robustness objective.
  Then $\sigma_\Agt$ wins both the $k$-resilience objective and the $(t,r)$-immunity objective.
  Using lemmas~\ref{lem:obj-resilience} and \ref{lem:obj-immunity}, $\sigma_\Agt$ is $k$-resilient and $(t,r)$-immune; it is therefore $(k,t,r)$-robust.
\end{proof}

\end{collect}


\section{Reduction to multidimensional mean-payoff objectives}\label{sec:mean-payoff}

We first show that the deviator game reduces the robustness problem to a winning strategy problem in multidimensional mean-payoff games.
We then solve this by requests to the polyhedron value problem of~\cite{BR15}.

\subsection{Multidimensional objectives}\label{sec:def-multidim}


\begin{mydefinition}[Multidimensional mean-payoff objective]
  Let $\mathcal{G}$ be a two-player game, $v \colon \Stat \mapsto \mathbb{Z}^{d}$ a multidimensional weight functions and $I,J \subseteq \lsem 1, d \rsem$\footnote{We write $\lsem i, j\rsem$ for the set of integers $\{ k \in \mathbb{Z} \mid i \le k \le j\}$.} a partition of $\lsem 1, d\rsem$ (i.e. $I \uplus J = \lsem 1, d \rsem$).
  We say that \eve \emph{ensures threshold} $u\in \mathbb{R}^{d}$ if she has a strategy $\sigma_\exists$ such that all outcomes $\rho$ of $\sigma_\exists$ are such that 
  for all $i \in I$, $\MP_{v_i}(\rho) \ge u_i$ and for all $j \in J$, 
  $\overline\MP_{v_{j}}(\rho) \ge u_{j}$, 
  where \(\MPSup_{v_j}(\rho) = \limsup_{n \rightarrow \infty} \frac{1}{n} \sum_{0 \le k\le n} v_j(\rho_{k}). \)
  That is, for all dimensions~$i\in I$, the limit inferior of the average of $v_i$ is greater than $u_i$ and for all dimensions $j\in J$ the limit superior of $v_j$ is greater than $u_j$.
\end{mydefinition}

We consider two decision problems on these games:
\begin{inparaenum}
\item The \emph{value problem}, asks given $\langle \mathcal{G},v,I,J\rangle$ a game with multidimensional mean-payoff objectives, and $u\in \mathbb{R}^d$, whether \eve can ensure $u$.
\item The \emph{polyhedron value problem}, asks given $\langle \mathcal{G},v,I,J\rangle$ a game with multidimensional mean-payoff objectives, and $(\lambda_1,\dots,\lambda_n)$ a tuple of linear inequations, whether there exists a threshold $u$ which \eve can ensure and that satisfies the inequation $\lambda_i$ for all $i$ in $\lsem 1 ,d \rsem$.
  We assume that all linear inequations are given by a tuple $(a_1,\dots,a_d,b)\in\mathbb{Q}^{d+1}$ and that a point $u\in \mathbb{R}^d$ satisfies it when $\sum_{i\in \lsem 1 , d \rsem} a_i \cdot u_i \ge b$.
\end{inparaenum}
The value problem was showed to be \co\NP-complete~\cite{velner12} while the polyhedron value problem is $\Sigma_2$\P-complete~\cite{BR15}.
Our goal is now to reduce our robustness problem to a polyhedron value problem for some well chosen weights.

\medskip

In our case, the number $d$ of dimensions will be equal to $4 \cdot |\Agt|$. 
We then number players so that $\Agt = \{ A_1, \dots , A_{|\Agt|}\}$.
Let $W = \max\{ |w_{i}(s)| \mid A_i \in \Agt, s \in \Stat\}$ be the maximum constant occurring in the weights of the game, notice that for all players $A_i$ and play $\rho$, $ - W - 1 < \MP_{i}(\rho) \le W$.
We fix parameters $k$, $t$ and define our weight function $v\colon \Stat \mapsto \mathbb{Z}^{d}$.
Let $i\in \lsem 1 ,|\Agt| \rsem$, the weights are given for $(s,D) \in \Stat \times 2^\Agt$ by: 
\begin{enumerate}
\item if $|D|\le t$ and $A_i\not\in D$, then $v_i(s,D) = w_{A_i}(s)$;
\item if $|D|> t$ or $A_i\in D$, then $v_i(s,D) = W$;
\item \label{it:small-k} if $|D|<k$, then for all $A_i \in \Agt$, $v_{|\Agt|+i}(s,D) = - w_{A_i}(s)$;
\item \label{it:equal-k} if $|D|=k$ and $A_i\in D$, then $v_{|\Agt|+i}(s,D) = - w_{A_i}(s)$;
\item \label{it:greater-k} if $|D| > k$ or $A_i \not\in D$ and $|D| = k$, then $v_{|\Agt|+i}(s,D) = W$.
\item \label{it:Deqvarnothing} if $D = \varnothing$ then $v_{2\cdot|\Agt|+i}(s,D)  = w_{A_i}(s)= -v_{3\cdot|\Agt|+i}(s,D)$;
\item \label{it:Dnevarnothing} if $D \ne \varnothing$ then $v_{2\cdot|\Agt|+i}(s,D)  = W = v_{3\cdot|\Agt|+i}(s,D)$;
\end{enumerate}
We take $I = \lsem 1 , |\Agt|\rsem \cup \lsem 2\cdot |\Agt|+1, 3 \cdot |\Agt| \rsem$ and $J = \lsem |\Agt|+1, 2\cdot |\Agt|\rsem \cup \lsem 3 \cdot |\Agt|+1, 4\cdot |\Agt|\rsem$.
Intuitively, the components $\lsem 1,|\Agt|\rsem$ are used for immunity, the components $\lsem |\Agt|+1, 2\cdot|\Agt|\rsem$ are used for resilience
and components $\lsem 2\cdot |\Agt|+1, 4\cdot |\Agt|\rsem$ are used to constrain the payoff in case of no deviation.

\subsection{Correctness of the objectives for robustness}

Let $\calG$ be a concurrent game, $\rho$ a play of $\devgame$ and $p\in \mathbb{R}^\Agt$ a payoff vector.
The following lemma links the weights we chose and our solution concepts.

\begin{lemma}\label{lem:correctness-objectives-robustness}
  Let $\rho$ be a play, $\sigma_\Agt$ a strategy profile and $ p =\payoff(\sigma_\Agt)$.
  \begin{itemize}
  \item $\rho$ satisfies objective $\devlimit(\rho) = \varnothing \Rightarrow \MP_{A_i}(\rho) = p_i$ if, and only if, $\MP_{2\cdot v_{|\Agt|+i}}(\rho) \ge p(A_i)$ and $\MPSup_{3\cdot v_{|\Agt|+i}}(\rho) \ge -p(A_i)$.
  \item
    If $\rho$ is an outcome of $\projstrat(\sigma_\Agt)$ then $\rho$ satisfies objective $\calRe(k,p)$ if, and only if, for all agents $A_i$, $\MPSup_{v_{|\Agt|+i}}(\rho) \ge -p(A_i)$.
  \item 
    If $\rho$ is an outcome of $\projstrat(\sigma_\Agt)$, then $\rho$ satisfies objective $\calI(t,r,p)$ if, and only if, for all agents $A_i$, $\MP_{v_i}(\rho) \ge p(A_i)-r$.
  \item
     If~$\rho$ is an outcome of $\projstrat(\sigma_\Agt)$ with $\payoff(\sigma_\Agt) = p$, then play $\rho$ satisfies 
  objective $\calR(k,t,r,p)$ if, and only if, for all agents $A_i$, $\MP_{v_i}(\rho) \ge p(A_i)-r$ and $\MPSup_{v_{|\Agt| + i}}(\rho) \ge -p(A_i)$.
  \end{itemize}
\end{lemma}

\begin{collect}{appendix-mean-payoff}{}{}
  \subsection{Proof of Lem.~\ref{lem:correctness-objectives-robustness}}
\end{collect}

\begin{collect}{appendix-mean-payoff}{}{}
\begin{lemma}\label{lem:mp-payoff}
  Let $\rho$ be a play.
  It satisfies objective $\devlimit(\rho) = \varnothing \Rightarrow \MP_{A_i}(\rho) = p_i$ if, and only if, $\MP_{2\cdot v_{|\Agt|+i}}(\rho) \ge p(A_i)$ and $\MPSup_{3\cdot v_{|\Agt|+i}}(\rho) \ge -p(A_i)$.
\end{lemma}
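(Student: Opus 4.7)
The plan is to separate plays into two classes based on the behaviour of the $D$-component and to exploit the definition of weights in items~6 and~7. Recall from Section~\ref{sec:suspect} that $\projdeux(\rho_i)$ is a monotone increasing sequence of subsets of $\Agt$, so there exists an index~$n_0$ from which $\projdeux(\rho_n) = \devlimit(\rho)$ is constant. Hence two cases:

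\textbf{Case 1: $\devlimit(\rho) = \varnothing$.} Then $\projdeux(\rho_n) = \varnothing$ for every $n$, so by item~6 we have $v_{2\cdot|\Agt|+i}(\rho_n) = w_{A_i}(\projun(\rho_n))$ and $v_{3\cdot|\Agt|+i}(\rho_n) = -w_{A_i}(\projun(\rho_n))$ for all $n$. Consequently $\MP_{v_{2\cdot|\Agt|+i}}(\rho) = \MP_{A_i}(\projpath(\rho))$, and using the identity $\limsup_{n} -x_n = -\liminf_n x_n$ we get $\MPSup_{v_{3\cdot|\Agt|+i}}(\rho) = -\MP_{A_i}(\projpath(\rho))$.

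\textbf{Case 2: $\devlimit(\rho) \neq \varnothing$.} For all $n \ge n_0$ we have $\projdeux(\rho_n) \neq \varnothing$, so by item~7, $v_{2\cdot|\Agt|+i}(\rho_n) = W = v_{3\cdot|\Agt|+i}(\rho_n)$. Since the mean payoff depends only on the tail, $\MP_{v_{2\cdot|\Agt|+i}}(\rho) = W = \MPSup_{v_{3\cdot|\Agt|+i}}(\rho)$. Combined with the uniform bound $|w_{A_i}(s)| \le W$, which gives $|p(A_i)| \le W$ whenever $p$ arises as the payoff of a run, we conclude that both inequalities $\MP_{v_{2\cdot|\Agt|+i}}(\rho) \ge p(A_i)$ and $\MPSup_{v_{3\cdot|\Agt|+i}}(\rho) \ge -p(A_i)$ hold automatically.

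With these two cases in hand, the equivalence is immediate. For the forward direction, if $\devlimit(\rho)=\varnothing$ then the hypothesis yields $\MP_{A_i}(\projpath(\rho)) = p(A_i)$ which, by Case~1, gives equality in the two mean-payoff conditions; if $\devlimit(\rho) \neq \varnothing$ then Case~2 yields the inequalities for free. For the backward direction, the implication is vacuously true when $\devlimit(\rho) \neq \varnothing$; and when $\devlimit(\rho) = \varnothing$, Case~1 rewrites the two inequalities as $\MP_{A_i}(\projpath(\rho)) \ge p(A_i)$ and $\MP_{A_i}(\projpath(\rho)) \le p(A_i)$, forcing equality. There is no real obstacle here beyond bookkeeping; the only mildly delicate point is invoking $|p(A_i)| \le W$ to dispose of Case~2 in the forward direction.
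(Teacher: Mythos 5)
Your proof is correct and follows essentially the same route as the paper's: a case split on whether $\devlimit(\rho)$ is empty, using item~6 to identify the dimensions $2\cdot|\Agt|+i$ and $3\cdot|\Agt|+i$ with $\pm w_{A_i}$ in the empty case, and item~7 together with the bound $|p(A_i)|\le W$ to make both inequalities hold automatically in the nonempty case. Your write-up is somewhat more explicit (the tail argument via $n_0$, the $\limsup(-x_n)=-\liminf x_n$ identity, and the justification of $|p(A_i)|\le W$), but the substance is identical.
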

\begin{proof}
  We distinguish two cases according to whether $\devlimit(\rho)$ is empty.
  \begin{itemize}
  \item If $\devlimit(\rho) \ne \varnothing$, the implication holds and we have that after some point in the execution $\projdeux(\rho) \ne \varnothing$.
    By item~\ref{it:Dnevarnothing} of the definition of $v$, the average weight on dimensions $2\cdot|\Agt|+i$ and $3\cdot |\Agt|+i$ will tend to $W$, which is greater than $p(A_i)$ and $-p(A_i)$.
    Therefore the equivalence holds.
  \item If $\devlimit(\rho) = \varnothing$, then along all the run the $D$ component is empty.
    By item~\ref{it:Deqvarnothing} of the definition of $v$,
    $\MP_{v_{2\cdot |\Agt|+i}}(\rho) = \MP_{w_{A_i}} (\rho)$ and $\MPSup_{v_{3\cdot |\Agt|+i}}(\rho) = - \MP_{w_{A_i}} (\rho)$.
    Therefore $\MP_{A_i}(\rho) = p_i$ is equivalent to the fact $\MP_{2\cdot v_{|\Agt|+i}}(\rho) \ge p(A_i)$ and $\MPSup_{3\cdot v_{|\Agt|+i}}(\rho) \ge -p(A_i)$.
  \end{itemize}
\end{proof}
\end{collect}

\begin{collect}{appendix-mean-payoff}{}{}
\begin{lemma}\label{lem:mp-resilient}
  If $\rho$ is an outcome of $\projstrat(\sigma_\Agt)$ with $\payoff(\sigma_\Agt) = p$, then play~$\rho$ satisfies objective $\calRe(k,p)$ if, and only if, for all agents $A_i$, $\MPSup_{v_{|\Agt|+i}}(\rho) \ge -p(A_i)$.
\end{lemma}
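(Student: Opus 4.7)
The strategy is a straightforward case analysis on the size of $\devlimit(\rho)$, exploiting the fact that, since the map $j \mapsto \projdeux(\rho_j)$ is monotone and takes values in the finite set $2^\Agt$, there is an index $N$ such that $\projdeux(\rho_j) = \devlimit(\rho)$ for all $j \ge N$. Beyond this index, the definition of $v_{|\Agt|+i}$ reduces to either the constant value $W$ or the value $-w_{A_i}(\projun(\rho_j))$. Since a bounded prefix does not affect a Cesàro average, the limsup of $v_{|\Agt|+i}$ along $\rho$ is therefore either $W$ or the limsup of $-w_{A_i}$ along $\projpath(\rho)$, and the latter equals $-\payoff_{A_i}(\projpath(\rho))$ by the elementary identity $\limsup(-x_n) = -\liminf(x_n)$.

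With this observation in hand, I would split into three cases. If $|\devlimit(\rho)| > k$, then eventually $v_{|\Agt|+i}(\rho_j) = W$ for every $i$ by item~\ref{it:greater-k}, so $\MPSup_{v_{|\Agt|+i}}(\rho) = W \ge -p(A_i)$ (using that $|p(A_i)| \le W$ since $p = \payoff(\sigma_\Agt)$ and $|w_{A_i}| \le W$), while $\rho \in \calRe(k,p)$ holds by the first clause of the definition. If $|\devlimit(\rho)| = k$, then for $A_i \in \devlimit(\rho)$ item~\ref{it:equal-k} gives $\MPSup_{v_{|\Agt|+i}}(\rho) = -\payoff_{A_i}(\projpath(\rho))$, so the constraint is exactly $\payoff_{A_i}(\projpath(\rho)) \le p(A_i)$; for $A_i \notin \devlimit(\rho)$ item~\ref{it:greater-k} makes the constraint trivially true. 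This matches the second clause of $\calRe(k,p)$. Finally, if $|\devlimit(\rho)| < k$, then $|\projdeux(\rho_j)| < k$ for all~$j$, so item~\ref{it:small-k} applies throughout and $\MPSup_{v_{|\Agt|+i}}(\rho) = -\payoff_{A_i}(\projpath(\rho))$ for every $A_i$, which matches the third clause of $\calRe(k,p)$.

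Combining the three cases yields the equivalence in both directions. The only delicate point is the cesarò stability argument (that a finite prefix does not affect the limsup), which is standard but should be stated explicitly, together with the bound $|p(A_i)| \le W$ used to discard the dimensions where the weight is saturated to $W$. No step looks like a real obstacle; the proof is essentially bookkeeping around the three clauses of $\calRe(k,p)$.
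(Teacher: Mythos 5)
Your proposal is correct and follows essentially the same route as the paper's proof: a three-way case split on $|\devlimit(\rho)|$, using the eventual stabilization of the deviator component, the weight definitions (items~\ref{it:small-k}--\ref{it:greater-k}), the bound $-p(A_i) \le W$ to discard saturated dimensions, and the identity $\limsup(-x_n) = -\liminf(x_n)$ to translate $\MPSup_{v_{|\Agt|+i}}(\rho) \ge -p(A_i)$ into $\payoff_{A_i}(\projpath(\rho)) \le p(A_i)$. The only (immaterial) difference is that you establish an exact identity for the limsup within each case and read off the equivalence at once, whereas the paper runs the two implications separately.
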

\begin{proof}
  First notice the following equivalence:
  \begin{align*}
    \payoff_{A_i}(\rho) \le p(A_i) & \Leftrightarrow  \lim\inf \frac{w_i(\rho_{\le n}) }{n} \le p(A_i)
    \Leftrightarrow \lim\sup -\frac{w_i(\rho_{\le n}) }{n} \ge -p(A_i) \\
  \end{align*}

  \fbox{$\Rightarrow$}
  Let $A_i$ be a player, and assume $\rho \in \calRe(k,p)$. 
  We distinguish three cases based on the size of $\devlimit(\rho)$:
  \begin{itemize}
  \item If $|\devlimit(\rho)| < k$ then for all indices $j$, $|\dev(\rho_{\le j'})|<k$ .
    Therefore $v_{|\Agt|+i}(\rho_{j}) = - w_{A_i}(\rho_{j})$ (item \ref{it:small-k} of the definition).
    Then as $\rho$ is in $\calRe(k,p)$, $\payoff_{A_i}(\rho) \le p(A_i)$ and therefore $\overline\MP_{v_{|\Agt|+i}}(\rho) \ge -p(A_i)$
  \item If $|\devlimit(\rho)|=k$, then we distinguish two cases: 
    \begin{itemize}
    \item If $A_i\not\in \devlimit(\rho)$, then there is a $j$ such that for all $j'\ge j$, $|\dev(\rho_{\le j})| = k$ and $A_i \not\in \dev(\rho_{j})$.
      Therefore for all $j' \ge j$ we have that $v_{|\Agt|+i}(\rho_{j'}) = W$ (item \ref{it:greater-k} of the definition).
      Since $p(A_i) \ge -W$, $\overline\MP_{v_{|\Agt|+i}}(\rho) \ge -p(A_i)$.
    \item
      Otherwise $A_i\in \devlimit(\rho)$, then there is a $j$ such that for all $j'\ge j$, $A_i \in \dev(\rho_{\le j'})$.
      Therefore for all $j' \ge j$ we have that $v_{|\Agt|+i}(\rho_{j'}) = - w_{A_i}(\rho_{j'})$ (item \ref{it:equal-k} of the definition).
      Then $\overline\MP_{v_{|\Agt|+i}}(\rho) = \lim\sup - \frac{w_i(\rho_{\le n})}{n}$.
      Then as $\rho$ satisfies $\calRe(k,p)$
      $\payoff_{A_i}(\rho) \le p(A_i)$ and therefore using the equivalence at the beginning of this proof $\overline\MP_{v_{|\Agt|+i}}(\rho) \ge -p(A_i)$.
    \end{itemize}
  \item Otherwise $|\devlimit(\rho)| > k$.
    Then, there is some index~$j$ such that either $|\dev(\rho_{\le j})| > k$ or  $|\dev(\rho_{\le j})| = k \land A\not\in \dev(\rho_{\le j})$.
    Then, by monotonicity of $\dev$ along $\rho$, for all $j'\ge j$, $v_{|\Agt|+i}(\rho_{j'}) = W$ (item \ref{it:greater-k} of the definition).
    Since $p(A_i) \ge -W$, $\overline\MP_{v_{|\Agt|+i}}(\rho) \ge -p(A_i)$.
  \end{itemize}

  \fbox{$\Leftarrow$}
  Now assume that for all players $A_i$, $\MPSup_{v_{|\Agt|+i}}(\rho) \ge - p(A_i)$.
  \begin{itemize}
  \item If $|\devlimit(\rho)|<k$, therefore for all $i$ and $j$ we have that $v_{|\Agt|+i}(\rho_j) = - w_{A_i}(\rho_j)$ then $\overline\MP_{v_{|\Agt|+i}}(\rho) = \lim\sup - \frac{w_i(\rho_{\le n})}{n}$.
    Thus using the equivalence at the beginning of this proof $\payoff_{A_i}(\rho) \le p(A_i)$ for all $A_i$.
  \item If $|\devlimit(\rho)|=k$.
    Let $A_i$ be a player in $\devlimit(\rho)$.
    Then for all $j$, either $|\dev(\rho_{\le j})|<k$ or $A_i\in \dev(\rho_{\le j})$.
    Therefore for all $j$ we have that $v_{|\Agt|+i}(\rho_j) = - w_{A_i}(\rho_j)$ then $\overline\MP_{v_{|\Agt|+i}}(\rho) = \lim\sup - \frac{w_i(\rho_{\le n})}{n}$.
    Thus using the equivalence at the beginning of this proof $\payoff_{A_i}(\rho) \le p(A_i)$.
    This being true for all players in $\devlimit(\rho)$ shows that $\rho$ belongs to $\calRe(k,p)$.
  \item
    Otherwise $|\devlimit(\rho)| > k$ and then $\rho \in \calRe(k,p)$ by definition of $\calRe(k,p)$.
  \end{itemize}
\end{proof}
\end{collect}

\begin{collect}{appendix-mean-payoff}{}{}
We now show the immunity part.
\begin{lemma}\label{lem:mp-immune}
  If $\rho$ is an outcome of $\projstrat(\sigma_\Agt)$ with $\payoff(\sigma_\Agt) = p$, then play~$\rho$ satisfies objective $\calI(t,r,p)$ if, and only if, for all agents $A_i$, $\MP_{v_i}(\rho) \ge p(A_i)-r$.
\end{lemma}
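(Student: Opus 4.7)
The plan is to adapt the case analysis used in Lemma~\ref{lem:mp-resilient}, but following the simpler two-case structure inherent to the immunity weight function (items~1--2 of the definition of $v$). Before splitting into cases, I would record the key monotonicity observation: $\projdeux(\rho_j)$ is non-decreasing in $j$ with limit $\devlimit(\rho)$, so the predicates ``$|\projdeux(\rho_j)| > t$'' and ``$A_i \in \projdeux(\rho_j)$'' eventually stabilise, and in particular if $|\devlimit(\rho)| \le t$ and $A_i \notin \devlimit(\rho)$ then \emph{every} prefix satisfies $|\dev(\rho_{\le j})| \le t$ and $A_i \notin \dev(\rho_{\le j})$.

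For the forward direction I would fix a player $A_i$ and split on the size of $\devlimit(\rho)$. If $|\devlimit(\rho)| > t$, then from some index on $v_i(\rho_j) = W$ by item~2, and since $W \ge p(A_i) \ge p(A_i) - r$ the mean-payoff bound follows trivially. If $|\devlimit(\rho)| \le t$ and $A_i \in \devlimit(\rho)$, the same $v_i = W$ argument applies, this time through the ``$A_i \in D$'' branch of item~2. The interesting case is $|\devlimit(\rho)| \le t$ and $A_i \notin \devlimit(\rho)$: the observation above gives $v_i(\rho_j) = w_{A_i}(\projun(\rho_j))$ for \emph{all} $j$, whence $\MP_{v_i}(\rho) = \payoff_{A_i}(\projpath(\rho))$; the hypothesis $\rho \in \calI(t,r,p)$ then supplies the bound $\payoff_{A_i}(\projpath(\rho)) \ge p(A_i) - r$ directly.

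For the converse I would again split on $|\devlimit(\rho)|$. If $|\devlimit(\rho)| > t$ then $\rho \in \calI(t,r,p)$ holds by definition, with nothing more to verify. Otherwise $|\devlimit(\rho)| \le t$: for each $A_i \in \Agt \setminus \devlimit(\rho)$ the same monotonicity observation shows $v_i(\rho_j) = w_{A_i}(\projun(\rho_j))$ throughout the run, so $\payoff_{A_i}(\projpath(\rho)) = \MP_{v_i}(\rho) \ge p(A_i) - r$, and this suffices for membership in $\calI(t,r,p)$.

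No step poses a real obstacle; the whole argument boils down to exploiting that $\dev$ is monotone along $\rho$ and converges to $\devlimit(\rho)$, combined with a direct reading of the two cases in the definition of $v_i$. The only mild subtlety worth flagging is the equality $\MP_{v_i}(\rho) = \payoff_{A_i}(\projpath(\rho))$: it requires the two weight sequences to coincide at every index (not merely eventually), which is precisely why we rely on the stronger monotonicity consequence that $A_i \notin \dev(\rho_{\le j})$ for all $j$ when $A_i \notin \devlimit(\rho)$.
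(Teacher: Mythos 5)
Your proof is correct and takes essentially the same route as the paper's: the same case split on $|\devlimit(\rho)|$ and on whether $A_i \in \devlimit(\rho)$, the same appeal to monotonicity of $\dev$ along $\rho$, and the same direct reading of items~1--2 of the definition of $v$. The only cosmetic difference is that you treat the two situations where $v_i$ is eventually $W$ as separate cases, which the paper merges into a single ``otherwise'' branch.
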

\begin{proof}
  \fbox{$\Rightarrow$}
  Let $A_i$ be a player and assume $\rho\in \calI(t,r,p)$.
  We distinguish two cases:
  \begin{itemize}
  \item If $|\devlimit(\rho)|\le t \land A_i \not\in \devlimit(\rho)$, then for all indices $j$, $v_i(\rho_{j}) = w_{A_i}(\rho_j)$. 
    Therefore $\MP_{v_i}(\rho) = \payoff_{A_i}(\projun(\rho))$.
    Then as $\rho$ satisfies $\calI(t,r,p)$, $p(A_i) -r \le \payoff_{A_i}(\projun(\rho)) = \MP_{v_i}(\rho)$.
  \item
    Otherwise there is some index~$j$ such that either $|\dev(\rho_{\le j})| > t$ or $A \in \dev(\rho_{\le j})$.
    Then, by monotonicity of $\dev$ along $\rho$, for all $j'\ge j$, $v_i(\rho_{j'}) = W \ge p(A_i)$.
    Hence $\MP_{v_i}(\rho) \ge p(A_i)$.
  \end{itemize}

  \fbox{$\Leftarrow$}
  Assume that for all players~$A_i$, $\MP_{v_i}(\rho) \ge p(A_i) - r$.
  \begin{itemize}
  \item
    If $|\devlimit(\rho)|\le t$.
    Let $A_i\not\in \devlimit(\rho)$, then for all $j$, $|\dev(\rho_{\le j})|\le t$ and $A\not\in \dev(\rho_{\le j})$.
    Therefore for all $j$ we have that $v_i(\rho_j) = w_{A_i}(\rho_j)$ and thus
    $\MP_{v_i}(\rho) \ge p(A_i) - r \Leftrightarrow \payoff_{A_i}(\rho) -r \le \payoff_{A_i}(\rho)$.
    This shows that $\rho$ belongs to $\calI(t,r,p)$. 
  \item 
    Otherwise $\devlimit(\rho)| > t$ and $\rho$ belongs to $\calI(t,r,p)$ by definition of $\calI(t,r,p)$.
  \end{itemize}
\end{proof}
\end{collect}

\begin{collect}{appendix-mean-payoff}{}{}
We now join the two preceding result to talk about the robustness objective.
\begin{lemma}\label{lem:multidim}
  If~$\rho$ is an outcome of $\projstrat(\sigma_\Agt)$ with $\payoff(\sigma_\Agt) = p$, then play $\rho$ satisfies 
  objective $\calR(k,t,r,p)$ if, and only if, for all agents $A_i$, $\MP_{v_i}(\rho) \ge p(A_i)-r$ and $\MPSup_{v_{|\Agt| + i}}(\rho) \ge -p(A_i)$.
\end{lemma}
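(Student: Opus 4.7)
The plan is to observe that this lemma is essentially an immediate corollary of the two preceding lemmas, so most of the work is already done. By definition, $\calR(k,t,r,p) = \calRe(k,p) \cap \calI(t,r,p)$, so a play $\rho$ satisfies $\calR(k,t,r,p)$ if and only if it satisfies both $\calRe(k,p)$ and $\calI(t,r,p)$.

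First I would invoke Lemma~\ref{lem:mp-resilient}: under the hypothesis that $\rho$ is an outcome of $\projstrat(\sigma_\Agt)$ with $\payoff(\sigma_\Agt) = p$, play $\rho$ satisfies $\calRe(k,p)$ if and only if for all agents $A_i$, $\MPSup_{v_{|\Agt|+i}}(\rho) \ge -p(A_i)$. Next I would invoke Lemma~\ref{lem:mp-immune}: under the same hypothesis, $\rho$ satisfies $\calI(t,r,p)$ if and only if for all agents $A_i$, $\MP_{v_i}(\rho) \ge p(A_i) - r$. Combining the two equivalences yields the desired characterization.

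There is no real obstacle here, since the two dimension blocks $\lsem 1,|\Agt|\rsem$ and $\lsem |\Agt|+1, 2\cdot|\Agt|\rsem$ were designed to independently encode immunity and resilience, so taking the conjunction of the two conditions trivially corresponds to taking the intersection of the two objectives. The only thing worth pointing out in the write-up is that the hypothesis ``$\rho$ is an outcome of $\projstrat(\sigma_\Agt)$ with $\payoff(\sigma_\Agt) = p$'' is exactly the common hypothesis of both Lemma~\ref{lem:mp-resilient} and Lemma~\ref{lem:mp-immune}, so no further assumption is needed.
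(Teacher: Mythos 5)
Your proposal is correct and matches the paper's own proof exactly: the paper also unfolds $\calR(k,t,r,p)=\calRe(k,p)\cap\calI(t,r,p)$ and chains the equivalences of Lemma~\ref{lem:mp-immune} and Lemma~\ref{lem:mp-resilient} under the same common hypothesis. Nothing is missing.
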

\begin{proof}
  \begin{align*}
    \rho \in \calR(k,t,r,p) \Leftrightarrow & \rho \text{ satisfies }\calRe(k,p)\text{ and }\calI(t,r,p) \text{ ~~~(By definition of $\calR(k,t,r,p)$)} \\
     \Leftrightarrow & \rho \in \calRe(k,p)\text{ and } \forall A_i\in \Agt.\ \MP_{v_i}(\rho) \ge p(A_i)-r \text{ ~~(By  Lem.~\ref{lem:mp-immune}) } \\
     \Leftrightarrow & \forall A_i\in \Agt.\ \MPSup_{v_{|\Agt| + i}}(\rho) \ge -p(A_i) \\ & \text{ and } \forall A_i\in \Agt.\ \MP_{v_i}(\rho) \ge p(A_i)-r \text{ ~~~(By  Lem.~\ref{lem:mp-resilient}) } 
  \end{align*}
\end{proof}
\end{collect}

Putting together this lemma and the correspondence between the deviator game and robust equilibria of Thm.~\ref{thm:dev-correct} we obtain the following proposition.
\begin{collect*}{appendix-mean-payoff}{
\begin{lemma}\mylabel{Lem}{lem:correct-mean-robust}
  Let \calG be a concurrent game with mean-payoff objectives.
  There is a $(k,t,r)$-robust equilibrium in $\calG$ 
  if, and only if, 
  for the multidimensional mean-payoff objective given by $v$, $I = \lsem 1 , |\Agt|\rsem \cup \lsem 2\cdot |\Agt|+1, 3 \cdot |\Agt| \rsem$ and $J = \lsem |\Agt|+1, 2\cdot |\Agt|\rsem \cup \lsem 3 \cdot |\Agt|+1, 4\cdot |\Agt|\rsem$, 
  there is a payoff vector $p$ such that \eve can ensure threshold $u$ in $\devgame$, where for all $i \in \lsem 1,|\Agt|\rsem$, $u_{i} = p(A_i) -r$, $u_{|\Agt|+i} = -p(A_i)$, $u_{2\cdot |\Agt|+i} = p(A_i)$, and $u_{3\cdot |\Agt|+i} = -p(A_i)$.
\end{lemma}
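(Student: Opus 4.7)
The plan is to combine Thm.~\ref{thm:dev-correct} with Lemmas~\ref{lem:mp-payoff} and \ref{lem:multidim}: Thm.~\ref{thm:dev-correct} translates robust equilibria into winning strategies for $\calR(k,t,r,p)$ in $\devgame$, while the two lemmas translate both that objective and the payoff-consistency condition ($\devlimit(\rho)=\varnothing \Rightarrow \payoff_{A_i}(\projpath(\rho))=p(A_i)$) into the multidimensional mean-payoff thresholds on the weight function $v$ fixed in Section~\ref{sec:def-multidim}.

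For the forward direction, start from a $(k,t,r)$-robust equilibrium $\sigma_\Agt$ and set $p = \payoff(\sigma_\Agt)$. By Thm.~\ref{thm:dev-correct}, $\projstrat(\sigma_\Agt)$ is winning for $\calR(k,t,r,p)$ in $\devgame$, and Lem.~\ref{lem:multidim} directly yields $\MP_{v_i}(\rho) \ge p(A_i)-r$ and $\MPSup_{v_{|\Agt|+i}}(\rho) \ge -p(A_i)$ on every outcome $\rho$ of $\projstrat(\sigma_\Agt)$. For the remaining components, observe that whenever $\devlimit(\rho)=\varnothing$, \adam never deviates, so $\projpath(\rho)$ coincides with the unique outcome of $\sigma_\Agt$ and has mean-payoff $p$; Lem.~\ref{lem:mp-payoff} then yields $\MP_{v_{2|\Agt|+i}}(\rho) \ge p(A_i)$ and $\MPSup_{v_{3|\Agt|+i}}(\rho) \ge -p(A_i)$. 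Hence $\projstrat(\sigma_\Agt)$ ensures the threshold $u$ described in the statement.

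For the converse, given a payoff vector $p$ and a strategy $\sigma_\shortEve$ of \eve ensuring threshold $u$, construct a strategy profile $\sigma_\Agt$ as follows: for any history $h$ in $\calG$, build inductively its unique lift $\tilde h$ in $\devgame$ along which \eve plays $\sigma_\shortEve$ and \adam plays the moves appearing in $h$, and set $\sigma_\Agt(h) = \sigma_\shortEve(\tilde h)$. A straightforward induction on length shows that the prefixes of outcomes of $\projstrat(\sigma_\Agt)$ are exactly the prefixes of outcomes of $\sigma_\shortEve$, so the two strategies have identical outcome sets in $\devgame$. Applying Lem.~\ref{lem:mp-payoff} and Lem.~\ref{lem:multidim} in the reverse direction, $\projstrat(\sigma_\Agt)$ is winning for $\calR(k,t,r,p)$, and the auxiliary components $2|\Agt|+i$ and $3|\Agt|+i$ pin down the payoff of the no-deviation outcome of $\sigma_\Agt$ to be exactly $p$. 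Hence $\payoff(\sigma_\Agt)=p$, and Thm.~\ref{thm:dev-correct} gives that $\sigma_\Agt$ is $(k,t,r)$-robust.

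The main obstacle is the backward direction, where two things must be reconciled: the threshold $u$ is defined relative to a payoff vector $p$ that must match the actual payoff of the recovered profile, and not every strategy of \eve in $\devgame$ is \emph{a priori} of the form $\projstrat(\sigma_\Agt)$. The lifting construction via the unique consistent $\tilde h$ handles the second point; the auxiliary dimensions $2|\Agt|+1,\dots,4|\Agt|$, introduced in Section~\ref{sec:def-multidim} precisely for this purpose, handle the first.
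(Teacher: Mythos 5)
Your proposal is correct and follows essentially the same route as the paper: the forward direction combines Thm.~\ref{thm:dev-correct} with Lem.~\ref{lem:mp-payoff} and Lem.~\ref{lem:multidim}, and the backward direction recovers $\sigma_\Agt$ by lifting histories of $\calG$ into $\devgame$ and playing $\sigma_\shortEve$ there, using the auxiliary dimensions $2|\Agt|+1,\dots,4|\Agt|$ to pin the undeviated payoff to $p$. Your explicit remark that the outcome sets of $\projstrat(\sigma_\Agt)$ and $\sigma_\shortEve$ coincide is a step the paper leaves implicit, but it is the same argument.
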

}{}{}{
\begin{proof}
  \fbox{$\Rightarrow$}
  Let $\sigma_\Agt$ be a robust equilibrium, using Thm.~\ref{thm:dev-correct}
  $\projstrat(\sigma_\Agt)$ is a strategy of \eve in $\devgame$ which ensures $\calR(k,t,r,p)$ where $p = \payoff(\Out(\sigma_\Agt))$.
  Let $\rho$ be an outcome of $\projstrat(\sigma_\Agt)$ in $\devgame$.
  We will show that it is above the threshold~$u$ in all dimensions.

  We first show that $\devlimit(\rho) = \varnothing \implies \MP_{A_i}(\rho) = p_i$.
  If $\devlimit(\rho) \ne \varnothing$ this is trivial.
  Otherwise $\devlimit(\rho) = \varnothing$, and by Lem.~\ref{prop:correctness-deviator-game}, there is $\rho'$ such that $\dev(\rho',\sigma_\Agt) = \varnothing$ and $\rho = \projpath(\rho')$.
  Then by Lem.~\ref{lem:deviator-path}, $\rho' = \Out_\calG(\sigma_\Agt)$. 
  Therefore $\rho = \projpath(\Out_\calG(\sigma_\Agt))$, thus $\MP_{A_i}(\rho) = \payoff_{i}(\sigma_\Agt) = p_i$ and the implication holds.
  
  Then by Lem.~\ref{lem:mp-payoff}, we have that $\MP_{v_{2\cdot |\Agt|+i}}(\rho) \ge p(A_i)$ and $\MPSup_{v_{3\cdot |\Agt|+i}}(\rho) \ge -p(A_i)$.
  This shows we ensure the correct thresholds on dimensions in $\lsem 2 \cdot |\Agt|+1, 4\cdot |\Agt|\rsem$.
  Now, by Lem.~\ref{lem:multidim}, for all agents $A_i$, $\MP_{v_i}(\rho) \ge p(A_i)-r$ and $\MPSup_{v_{|\Agt| + i}}(\rho) \ge -p(A_i)$.
  This shows we ensure the correct thresholds on dimensions in $\lsem 1, 2\cdot |\Agt|\rsem$.

  \fbox{$\Leftarrow$}
  In the other direction, let $p$ be a payoff vector such that there exist a strategy $\sigma_\exists$ in $\devgame$ that ensure the threshold $u$.
  We define a strategy profile $\sigma_\Agt$ by induction, given a history $h$ in $\calG$:
  \begin{inparaenum}
  \item if $|h| = 1$, then $h' = (h_0,\varnothing)$;
  \item otherwise we assume $\sigma_\Agt$ has already been defined for histories shorter than $h$, we let $h'= (h_0,\varnothing) \cdot (\sigma_\Pl(h_{\le 0}), \act_0(h)) \cdot \left((h_i,\dev(h_{\le i},\sigma_\Agt)) \cdot (\sigma_\Pl(h_{\le i}), \act_i(h))\right)_{0 < i < |h|-1} \cdot (h,\dev(h,\sigma_\Agt))$.
    Note the similarity with the second point of Lem.~\ref{prop:correctness-deviator-game} and the fact that $\projpath(h') = h$.
  \end{inparaenum}
  We then set $\sigma_\Agt(h) = \sigma_\exists(h')$.

  We show that $\payoff(\Out(\sigma_\Agt)) = p$.
  Consider the strategy~$\sigma_\forall$ of \adam in $\devgame$ that always plays the same move as \eve, the outcome $\rho = \Out_\devgame(\sigma_\exists,\sigma_\forall)$ is such that $\devlimit(\rho)= \varnothing$.
  Since $\sigma_\exists$ ensures the threshold $u$, using Lem.~\ref{lem:mp-payoff}, for all agents $A_i$, $\MP_{A_i}(\rho) = p(A_i)$.
  We now show by induction that $\projpath(\rho)$ is compatible with $\sigma_\Agt$.
  Let $i\in \mathbb{N}$, we assume that the property holds for prefixes of $\projpath(\rho)$ of length less than~$i$.
  We have that $\projact(\rho_i) = \sigma_\Agt(\projpath(\rho)_{\le i}))$ because \adam plays the same move than \eve on this path.
  We also have that $\projun(\rho_{i+1}) = \Tab(\projun(\rho_i),\projact(\rho_i)) =  \Tab(\projun(\rho_i),\sigma_\Agt(\projpath(\rho)_{\le i}))$ by construction of $\devgame$, and therefore $\projpath(\rho_{i+1})$ is compatible with $\sigma_\Agt$.
  Then as $\rho$ is the projection of the outcome of $\sigma_\Agt$, we have that $\payoff(\Out(\sigma_\Agt)) = p$.
  
  Let $\rho$ be an outcome of $\sigma_\exists$.
  By Lem.~\ref{lem:mp-immune}, we have that $\rho$ satisfies the objective $\calI(t,r,p)$, and by Lem.~\ref{lem:mp-resilient} it satisfies $\calRe(k,p)$.
  Using Thm.~\ref{thm:dev-correct}, strategy $\sigma_\Agt$ is an equilibrium in $\calG$ with the payoff $p$.
\end{proof}
}\end{collect*}

\subsection{Formulation of the robustness problem as a polyhedron value problem}
From the previous lemma, we can deduce an algorithm which works by querying the polyhedron value problem.
Given a game $\calG$ and parameters $k,t,r$, we ask whether there exists a payoff $u$ that \eve can ensured in the game $\devgame$ with multidimensional mean-payoff objective given by $v$, $I$, $J$, and such that for all $i \in \lsem 1 , |\Agt|\rsem$, $u_i + r = - u_{|\Agt|+i} = u_{2 \cdot |\Agt| + i} = - u_{3\cdot |\Agt|+i}$.
As we will show in Thm.~\ref{thm:pspace-membership} thanks to Lem.~\ref{lem:correct-mean-robust}, the answer to this question is yes if, and only if, there is a $(k,t,r)$-robust equilibrium.
From the point of view of complexity, however, the deviator game on which we perform the query can be of exponential size compared to the original game.
To describe more precisely the complexity of the problem, we remark by applying the bound of \cite[Thm.~22]{cav15-long}, that given a query, we can find solutions which have a small representation.

\begin{collect*}{appendix-mean-payoff}{
\begin{lemma}\mylabel{Lem}{lem:small-witness}
  If there is a solution to the polyhedron value problem in $\devgame$ then there is one whose encoding is of polynomial size with respect to $\calG$ and the polyhedron given as input.
\end{lemma}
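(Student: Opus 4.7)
The plan is to show that the polyhedron $\mathcal{S} \subseteq \mathbb{R}^{d}$ of thresholds that \eve can ensure in $\devgame$ (with $d = 4|\Agt|$), intersected with the polyhedron $\mathcal{P}$ defined by the input linear inequations $(\lambda_1,\ldots,\lambda_n)$, has a vertex whose coordinates are rationals with polynomially bounded numerators and denominators. Since $\mathcal{S}$ is downward-closed and bounded above in each coordinate by $W$ (the maximum absolute weight in $\calG$), and $\mathcal{P}$ has polynomial-size description, a non-empty intersection admits a vertex, and it suffices to control the bit-complexity of the tight constraints of $\mathcal{S}$ at that vertex.

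The key structural observation is that along any play $\rho$ in $\devgame$ the deviator component is monotonically non-decreasing: by the construction of the transition relation, $\projdeux(\rho_i) \subseteq \projdeux(\rho_{i+1})$ for every $i$. Consequently, every simple cycle of $\devgame$ is confined to a single \emph{deviator layer} $L_D = \{(s,D) : s \in \Stat\}$ for some fixed $D \subseteq \Agt$. Each such layer has only $|\Stat|$ states, and the weight function $v$ depends only on the current state; thus every cycle mean in $L_D$ is a rational number whose numerator and denominator are bounded polynomially in $|\Stat|$ and $W$, so of polynomial bit-complexity in the size of $\calG$.

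By standard facts about multidimensional mean-payoff games~\cite{velner12,BR15}, the ensurable-threshold polyhedron $\mathcal{S}$ admits a description whose facets are obtained from cycle means combined with reachability witnesses between strongly connected components; in particular, any vertex of $\mathcal{S}$ can be written as a rational convex combination of at most $d$ cycle means (one per relevant layer), the combination being determined by a linear system of polynomial size. Since each cycle mean has polynomial bit-complexity and $d$ is polynomial, the resulting vertex coordinates have polynomial bit-complexity as well. Taking a vertex of $\mathcal{S} \cap \mathcal{P}$, determined by $d$ linearly independent tight constraints drawn from these facets and from $\mathcal{P}$, and using the standard Cramer/Hadamard bound on vertex coordinates of rational polyhedra, yields a solution of polynomial bit-complexity.

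The main obstacle is justifying the compact facet description of $\mathcal{S}$: namely, that its vertices can be expressed using polynomially many cycle means originating from polynomially-sized layers, despite $\devgame$ itself being of exponential size. This requires specialising the Pareto-frontier analysis of~\cite{BR15} to the layered DAG structure of $\devgame$, in particular handling how \eve navigates through the (at most $|\Agt|+1$) layers that any play visits and how \adam's choices of deviations affect which terminal layer is reached. Once this geometric decomposition is in place, the bit-complexity bound on the witness follows from routine linear-algebraic estimates.
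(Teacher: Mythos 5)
Your proposal takes a genuinely different route from the paper, and it has a real gap at its core. The paper's proof of this lemma is essentially a one-step citation: Theorem~22 of \cite{BR15} already bounds the encoding size of a witness threshold by a polynomial in the number of dimensions $d$, the bit-size of the input inequations, the bit-size of $W$, and the bit-size needed to \emph{encode the state space} of the game. For $\devgame$ a state is a state of $\calG$ together with one bit per agent, so that last parameter is bounded by the encoding size of $\Stat$ plus $|\Agt|$ --- polynomial in $\calG$ even though $\devgame$ has exponentially many states. No structural analysis of $\devgame$ is needed for this lemma; the layered DAG structure is exploited only later, for the \PSPACE{} value-problem algorithm of Section~\ref{sec:fixed-coalition}.

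Your route instead tries to re-derive the witness bound from the geometry of the set $\mathcal{S}$ of ensurable thresholds, and this is where it breaks down. First, for multidimensional mean-payoff \emph{games} (as opposed to one-player graphs) the set of ensurable thresholds is in general a finite union of convex polyhedra, not a single polyhedron, so ``a vertex of $\mathcal{S}\cap\mathcal{P}$'' is not well defined as stated; moreover even a single downward-closed polyhedron intersected with $\mathcal{P}$ need not have any vertex (e.g.\ $\{u \mid u_1+u_2\le 0\}$ has no $0$-dimensional face), so the Cramer/Hadamard step needs a polynomial-size \emph{inequality description} of the relevant polyhedral piece rather than a vertex. Second, and decisively, producing that description is exactly the step you defer: the claim that every extreme point is a rational combination of at most $d$ layer-confined cycle means ``combined with reachability witnesses'' is not a standard fact that can be waved at --- Adam's adversarial choices constrain which combinations of cycle means Eve can actually secure, and bounding the bit-complexity of the resulting facets is precisely the content of the theorem of \cite{BR15} that the paper invokes. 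As written, your argument assumes the conclusion of that theorem in the step you yourself label ``the main obstacle,'' so the proof is incomplete. Your observation that the monotonicity of the deviator component confines every simple cycle to a layer of $|\Stat|$ states is correct and would be a sensible starting point if one had to re-prove the witness bound from scratch, but on its own it does not yield the lemma, and the paper shows it is unnecessary here.
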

}{}{
\begin{proof}
  If we apply the bound of \cite[Thm.~22]{cav15-long}, to the deviator game, then this shows that if there is a solution there is one of size bounded by $d \cdot P_1(\max\{\size{a_i,b_i} \mid 1 \le i \le k\}, P_6(\size{W}, \size{\Stat \times 2^\Agt}, d), d)$ where $\size{x}$ represents the size of the encoding of the object $x$, $P_1$ and $P_6$ are two polynomial functions, $(a_i,b_i)_{1\le i \le k}$ are the inequations defining the polyhedron, $W$ is the maximal constant occurring in the weights and $d$ is the number of dimension, which in our case is equal to $4 \cdot |\Agt|$.
  The size $\size{\Stat \times 2^\Agt}$ is bounded by $\size{\Stat} + |\Agt|$ (using one bit for each agent in our encoding).
  Thus the global bound is polynomial with respect to the game $\calG$.
\end{proof}
}{}
\end{collect*}
We can therefore enumerate all possible solutions in polynomial space.
To obtain an polynomial space algorithm, we must be able to check one solution in polynomial space as well.
This account to show that queries for the value problem in the deviator game can be done in space polynomial with respect to the original game.
This is the goal of the next section, and it is done by considering small parts of the deviator game called fixed coalition games.


\section{Fixed coalition game}\label{sec:fixed-coalition}
Although the deviator game may be of exponential size, it presents a particular structure.
As the set of deviators only increases during any run, the game can be seen as the product of the original game with a directed acyclic graph (DAG).
The nodes of this DAG correspond to possible sets of deviators, it is of exponential size but polynomial degree and depth.
We exploit this structure to obtain a polynomial space algorithm for the value problem and thus also for the polyhedron value problem and the robustness problem.
The idea is to compute winning states in one component at a time, and to recursively call the procedure for states that are successors of the current component.
We will therefore consider one different game for each component.

We now present the details of the procedure.
For a fixed set of deviator $D$, the possible successors of states of the component $\Stat \times D$ are the states in: \\
$\Succ(D) = \{ \Tab_\calD((s,D), (m_\Agt,m_\Agt')) \mid s \in \Stat, m_\Agt,m'_\Agt\in \Mov(s) \} ~\setminus~ \Stat \times D\}.$ 
Note that the size of $\Succ(D)$ is bounded by $|\Stat| \times |\Tab|$, hence it is polynomial.
Let $u$ be a payoff threshold, we want to know whether \eve can ensure $u$ in $\devgame$, for the multi-dimensional objective defined in Section~\ref{sec:def-multidim}.
A winning path~$\rho$ from a state in $\Stat \times D$ is either: 
\begin{inparaenum}[1)]
\item such that $\devlimit(\rho)= D$;
\item or it reaches a state in $\Succ(D)$ and follow a winning path from there.
\end{inparaenum}
Assume we have computed all the states in $\Succ(D)$ that are winning.
We can stop the game as soon as $\Succ(D)$ is reached, and declare \Eve the winner if the state that is reached is a winning state of $\devgame$.
This process can be seen as a game~$\mathcal{F}(D,u)$, called the \newdef{fixed coalition game}.

In this game the states are those of $(\Stat\times D) \cup \Succ(D)$; transitions are the same than in $\devgame$ on the states of $\Stat \times D$ and the states of $\Succ(D)$ have only self loops.
The winning condition is identical to $\calR(k,t,p)$ for the plays that never leave $(\Stat\times D)$; and for a play that reach some $(s',D') \in\Succ(D)$, it is considered winning \eve has a winning strategy from $(s',D')$ in $\devgame$ and losing otherwise.

In the fixed coalition game, we keep the weights previously defined for states of $\Stat\times D$, and fix it for the states that are not in the same $D$ component by giving a high payoff on states that are winning and a low one on the losing ones.
Formally, we define a multidimensional weight function $v^f$ on $\mathcal{F}(D,u)$ by:
\begin{inparaenum}
\item for all $s \in \Stat$, and all $i \in \lsem 1 , 4\cdot |\Agt|\rsem$, $v^f_i(s,D) = v_i(s,D)$.
\item if $(s,D') \in \Succ(D)$ and \eve can ensure $u$ from $(s,D')$, then for all $i \in \lsem 1 , 4 \cdot |\Agt|\rsem$, $v^f_i(s,D') = W$.
\item if $(s,D') \in \Succ(D)$ and \eve cannot ensure $u$ from $(s,D')$, then  for all $i \in \lsem 1 , 4 \cdot |\Agt|\rsem$, $v^f_i(s,D') = -W-1$.
\end{inparaenum}

\begin{collect*}{appendix-fixed-coalition}{
\begin{lemma}\mylabel{Lem}{lem:fixed-coalition} 
  \eve can ensure payoff $u \in \lsem -W, W\rsem^d$ in $\devgame$ from $(s,D)$ if, and only if, she can ensure $u$ in the fixed coalition game $\mathcal{F}(D,p)$ from $(s,D)$.
\end{lemma}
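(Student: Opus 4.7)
The plan is to prove both directions by transferring strategies between $\devgame$ and $\mathcal{F}(D,u)$, relying on the fact that both $\liminf$- and $\limsup$-averages of bounded sequences are invariant under finite prefix modifications, together with the observation that the values of $v^f$ on $\Succ(D)$ are chosen so that a state $(s',D')\in\Succ(D)$ meets the threshold on its self-loop if and only if it is a winning state of $\devgame$: in the winning case $v^f_i(s',D')=W\ge u_i$ for every $i$, in the losing case $v^f_i(s',D')=-W-1<-W\le u_i$.

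For the forward direction, let $\sigma$ be an \eve strategy ensuring $u$ in $\devgame$ from $(s,D)$. I would show that $\sigma$, viewed on histories whose projection stays in the component $\Stat\times D$, also ensures $u$ in $\mathcal{F}(D,u)$. Any outcome $\rho^f$ of $\sigma$ in $\mathcal{F}(D,u)$ either stays in $\Stat\times D$ forever, in which case $\rho^f$ is simultaneously an outcome of $\sigma$ in $\devgame$ on the coinciding weights and therefore meets $u$, or it first enters some $(s',D')\in\Succ(D)$ after a finite prefix $h^*$. In the second case, the shifted strategy $\sigma'(h)=\sigma(h^*\cdot h)$ must be winning from $(s',D')$ in $\devgame$: otherwise an \adam strategy defeating $\sigma'$ could be prepended with the forced moves along $h^*$ to produce an outcome of $\sigma$ in $\devgame$ whose $\liminf$- or $\limsup$-average falls below $u$ on some coordinate, contradicting that $\sigma$ ensures $u$. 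Consequently $(s',D')$ is winning in $\devgame$, so $v^f_i(s',D')=W\ge u_i$ for all $i$, and the self-loop meets the threshold.

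For the backward direction, let $\sigma^f$ ensure $u$ in $\mathcal{F}(D,u)$ from $(s,D)$. I would construct $\sigma$ in $\devgame$ by imitating $\sigma^f$ so long as the play stays in $\Stat\times D$, and switching, at the first visit to a state $(s',D')\in\Succ(D)$, to a fixed $\devgame$-winning strategy $\tau_{(s',D')}$ from $(s',D')$. Such a $\tau_{(s',D')}$ exists at every $(s',D')$ reachable under $\sigma^f$, since otherwise $v^f_i(s',D')=-W-1$ and the resulting self-loop outcome would have mean-payoff $-W-1<u_i$ on every dimension, contradicting that $\sigma^f$ is winning in $\mathcal{F}(D,u)$. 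Any outcome of $\sigma$ in $\devgame$ either stays in $\Stat\times D$ forever, coincides on shared weights with an outcome of $\sigma^f$, and therefore meets $u$, or it has a suffix that is an outcome of some $\tau_{(s',D')}$ from $(s',D')$ in $\devgame$ meeting $u$ on every dimension; prepending the bounded finite prefix preserves both the $\liminf$- and $\limsup$-averages.

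The main delicacy is the shifted-strategy argument in the forward direction, because the state $(s',D')$ may be reached by $\sigma$ along many distinct histories and the verdict ``$(s',D')$ is $\devgame$-winning'' must be established once and reused uniformly. I would handle this by noting that it suffices to exhibit a single winning continuation of $\devgame$ from $(s',D')$ to guarantee $v^f_i(s',D')=W$, and that continuation is furnished by $\sigma'$ defined from any one history $h^*$ leading into $(s',D')$; all subsequent entries into $(s',D')$, via whatever routes, then inherit the threshold through the self-loop. The remaining ingredients are the standard facts that transitions and weights of $\mathcal{F}(D,u)$ and $\devgame$ coincide on $\Stat\times D$ and that mean-payoff values of bounded weight sequences do not depend on a finite prefix.
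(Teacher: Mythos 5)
Your proposal is correct and follows essentially the same route as the paper's proof: transfer the strategy into $\mathcal{F}(D,u)$ for the forward direction, using the fact that any reachable $(s',D')\in\Succ(D)$ must be winning in $\devgame$ (hence carries weight $W$); and for the converse, stitch the $\mathcal{F}(D,u)$-strategy to winning continuations $\tau_{(s',D')}$ at the first exit from the $D$-component. The only difference is that you make explicit the prefix-independence of mean-payoff and the shifted-strategy justification, which the paper leaves implicit.
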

}{}{}{
\begin{proof}
  \fbox{$\Rightarrow$}
  Let $\sigma_\exists$ be strategy which ensures $u$ in $\devgame$ from $(s,D)$.
  If we apply the strategy in $\mathcal{F}$, any of its outcome $\rho$ from $(s,D)$ will either stay in the $D$ component and correspond to an outcome of $\sigma_\exists$ in $\devgame$ or reach a state $(s',D')$ in $\Succ(D)$.
  Since \eve can ensure the payoff $u$ in $\devgame$, and $(s',D')$ is reached by one of its outcome, she can do so from $(s',D')$.
  Therefore $(s',D')$ is a state where the weights are maximal (equal to $W$ on all dimensions) and the outcome is winning in $\mathcal{F}(D,u)$.

  \fbox{$\Leftarrow$}
  Let $\sigma_\exists$ be strategy that ensures $u$ in $\mathcal{F}(D,p)$.
  Every outcome of this strategy that get out of the $D$ component, reach a state $(s',D')\in \Succ(D)$ where the weights are greater than $u$.
  These states cannot be losing since the weights on any dimension $i$ would be smaller than $-W-1$, which is smaller than $u_i$.
  This means that these states are winning, and to each state $(s',D')$ with $D' \ne D$ that is reached by an outcome of $\sigma_\exists$ we can associate a strategy $\sigma^{s',D'}_\Eve$ that is winning from $(s',D')$.
  We consider the strategy $\sigma'_\exists$ that plays according to $\sigma_\exists$ as long has we stay in the $D$ component and according to $\sigma^{s',D'}_\Eve$ once we reach another component, where $(s',D')$ is the first state outside the $D$ component that was reached.

  The construction is such that the strategy $\sigma'_\exists$ ensures $u$ in $\devgame$:
  let $\rho$ be one of its outcome, if $\rho$ stays in the $D$ component, it is also an outcome of $\sigma_\exists$ with the same payoff which is above $u$; otherwise $\rho$ reaches a state $(s',D')$ in $\Succ(D)$ and from this point \eve follows a strategy that ensures threshold $u$.
\end{proof}
}
\end{collect*}

Using this correspondence, we deduce a polynomial space algorithm to check that \eve can ensure a given value in the deviator game and thus a polynomial space algorithm for the robustness problem.
\begin{theorem}\mylabel{Thm}{thm:pspace-membership}
  There is a polynomial space algorithm, that given a concurrent game $\calG$, tells if there is a $(k,t,r)$-robust equilibrium.
\end{theorem}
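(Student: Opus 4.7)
The plan is to combine the reduction of Lemma~\ref{lem:correct-mean-robust} with a recursive procedure that exploits the DAG structure of deviator sets, so that we never materialise the whole (exponential) deviator game at once. By Lemma~\ref{lem:correct-mean-robust}, the existence of a $(k,t,r)$-robust equilibrium is equivalent to the existence of a payoff vector $p\in\mathbb{R}^\Agt$ such that \eve can ensure the threshold $u(p)$ determined by the formulas of that lemma in the deviator game $\devgame$. Using Lemma~\ref{lem:small-witness}, it suffices to look for a $p$ whose encoding is polynomial in the size of $\calG$. The algorithm therefore starts by guessing such a $p$ (polynomial space by standard Savitch-style enumeration over all candidate $p$), computes the induced threshold $u$, and then checks whether \eve can ensure $u$ from $(s_0,\varnothing)$.

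The core subroutine is a procedure $\texttt{wins}(s,D)$ returning whether \eve can ensure $u$ from $(s,D)$ in $\devgame$. I would implement it by invoking Lemma~\ref{lem:fixed-coalition}: build the fixed coalition game $\mathcal{F}(D,u)$, whose state space $(\Stat\times\{D\})\cup \Succ(D)$ is of polynomial size, whose weight function $v^f$ on states of $\Stat\times\{D\}$ is inherited from $v$, and whose weight on a sink $(s',D')\in\Succ(D)$ is set to $W$ if $\texttt{wins}(s',D')$ and to $-W-1$ otherwise. The value problem on this polynomial-size multidimensional mean-payoff game can then be decided in polynomial space (it is in co\NP by~\cite{velner12}). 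Computing the sink weights requires one recursive call $\texttt{wins}(s',D')$ for each $(s',D')\in\Succ(D)$; crucially, every such $D'$ strictly contains $D$, so the depth of the recursion is bounded by $|\Agt|$.

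For the space analysis, at each recursion level we need to store $D$, the polynomial-size local game $\mathcal{F}(D,u)$, and the Boolean return values of the subcalls for the polynomially many elements of $\Succ(D)$; all of this is polynomial. Space is reused across sibling recursive calls in the standard way, so the total working space is the product of the per-level polynomial space with the depth $|\Agt|$, which remains polynomial. Combined with the outer guess of $p$, the entire procedure runs in polynomial space, and its correctness follows from Lemmas~\ref{lem:correct-mean-robust} and~\ref{lem:fixed-coalition}.

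The main obstacle I expect is making precise that the recursive calls can indeed be organised so that space is reused (i.e.\ the intermediate Booleans for $\Succ(D)$ are computed one at a time, not stored simultaneously across the whole exponential DAG), and that the base case—states $(s,\Agt)$ from which $D$ cannot grow further—reduces to a single value-problem query on $\mathcal{F}(\Agt,u)$. A secondary care point is to verify that the guessed $p$ is consistent: the weight components indexed by $\lsem 2|\Agt|+1,4|\Agt|\rsem$ in Section~\ref{sec:def-multidim} are precisely what forces the outcome when $\devlimit=\varnothing$ to have mean payoff exactly $p$, so no extra verification step is needed beyond the value query itself.
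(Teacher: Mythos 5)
Your proposal is correct and follows essentially the same route as the paper: reduce via Lemma~\ref{lem:correct-mean-robust} to finding a threshold of polynomial size (Lemma~\ref{lem:small-witness}), and decide each value query by recursing through the fixed coalition games of Lemma~\ref{lem:fixed-coalition} with recursion depth bounded by $|\Agt|$. The only cosmetic difference is that you guess the payoff $p$ directly rather than phrasing the search as a polyhedron value query with explicit linear equations, which amounts to the same polynomial-space enumeration.
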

\begin{proof}
  We first show that there is a polynomial space algorithm to solve the value problem in $\devgame$.
  We consider a threshold $u$ and a state $(s,D)$.
  In the fixed coalition game~$\mathcal{F}(D,u)$, for each $(s',D') \in \Succ(D)$, we can compute whether it is winning by recursive calls.
  Once the weights for all $(s',D') \in \Succ(D)$ have been computed for $\mathcal{F}(D,u)$, we can solve the value problem in $\mathcal{F}(D,u)$.
  Thanks to Lem.~\ref{lem:fixed-coalition} the answer to value problem in this game is yes exactly when \eve can ensure $u$ from $(s,D)$ in $\devgame$.
  There is a \co\NP~algorithm~\cite{velner12} to check the value problem in a given game, and therefore there also is an algorithm which uses polynomial space.
  The size of the stack of recursive calls is bounded by $|\Agt|$, so the global algorithm uses polynomial space.

  We now use this to show that there is a polynomial space algorithm for the polyhedron value problem in $\devgame$.
  We showed in Lem.~\ref{lem:small-witness}, that if the polyhedron value problem has a solution then there is a threshold $u$ of polynomial size that is witness of this property.
  We can enumerate all the thresholds that satisfy the size bound in polynomial space.
  We can then test that these thresholds satisfy the given linear inequations, and that the algorithm for the value problem answers yes on this input, in polynomial space thanks to the previous algorithm.
  If this is the case for one of the thresholds, then we answer yes for the polyhedron value problem.
  The correctness of this procedure holds thanks to Lem.~\ref{lem:small-witness}.

  We now use this to show that there is a polynomial space algorithm for the robustness problem.
  Given a game $\calG$ and parameters $(k,t,r)$, we define a tuple of linear equations, 
  for all $i \in \lsem 1 , |\Agt| \rsem$, \(x_{2\cdot |\Agt|+i} = x_i + r \land x_{2\cdot |\Agt|+i} = - x_{|\Agt|+i}
  \land x_{2\cdot |\Agt|+i} = - x_{3 \cdot |\Agt|+i} \)
  (each equation can be expressed by two inequations).
  Thanks to Lem.~\ref{lem:correct-mean-robust}, there is a payoff which satisfies these constraints and which \eve can ensure in $\devgame$ if, and only if, there is a $(k,t,r)$-robust equilibrium.
  Then, querying the algorithm we described for the polyhedron value problem in $\devgame$ with our system of inequations, answers the robustness problem.
\end{proof}

\section{Hardness}\label{sec:hardness}
In this section, we show a matching lower bound for the resilience problem.
The lower bound holds for weights that are $0$ in every states except on some terminal states where they can be $1$.
This is also called \emph{simple reachability objectives}.

\begin{collect*}{appendix-hardness}{\begin{theorem}
  The robustness problem is \PSPACE-complete.
\end{theorem}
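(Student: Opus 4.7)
The upper bound is given by Thm.~\ref{thm:pspace-membership}, so it remains to prove \PSPACE-hardness, and it suffices to do so already for pure resilience (the case $t=0$) with reachability payoffs. My plan is to reduce from QBF. Given an instance $\phi = Q_1 x_1 Q_2 x_2 \cdots Q_n x_n\, \psi(x_1,\ldots,x_n)$ with alternating quantifiers, I would build in polynomial time a concurrent game $\calG_\phi$ with one player $A_i$ per variable (plus a small constant number of auxiliary ``gadget'' players) and set $k=1$, so that $\phi$ is true if and only if $\calG_\phi$ admits a $1$-resilient equilibrium. Note that $1$-resilience coincides with the existence of a pure Nash equilibrium, which is the natural starting point for \PSPACE-hardness in this setting; larger values of $k$ can then be handled by a padding argument that adds dummy players with trivial preferences.

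The game proceeds sequentially in $n$ rounds: in round~$i$, only player $A_i$ has a meaningful binary choice (encoding the value of $x_i$), and the actions of the other players are ignored. After round~$n$, the game enters a short verification subgame that evaluates $\psi$ on the chosen assignment and routes to one of two terminal states, $s_\top$ or $s_\bot$. Reachability targets are set so that each existential player aims at $s_\top$ (payoff~$1$ iff $\psi$ holds), and each universal player aims at $s_\bot$. For the $(\Rightarrow)$ direction, an existential winning strategy $\tau$ for $\phi$ yields a $1$-resilient profile: existential players play according to $\tau_i$ as a function of the previously committed values, universal players play arbitrarily. Any single-player deviation by a universal $A_j$ corresponds to flipping one universal variable, and by the winning property of $\tau$ the resulting play still reaches~$s_\top$, so no universal deviator improves its payoff; existential players are already at payoff~$1$.

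The main technical obstacle is the $(\Leftarrow)$ direction, because ``trivial'' pure equilibria could otherwise exist even when $\phi$ is false: for instance, a profile in which $s_\bot$ is reached and the fixed strategies of the remaining existential players are so unresponsive that no single universal deviation opens a path to $s_\top$. To rule these out, I would insert at the end of the verification phase a concurrent matching-penny gadget played between two auxiliary ``gadget'' players, whose preferences are arranged (in the style of the Nash-equilibrium hardness constructions of~\cite{ummels2009complexity,ummels2011complexity}) so that the gadget admits a pure equilibrium only on the plays where $s_\top$ is eventually reached. With this gadget in place, the existence of a $1$-resilient equilibrium in $\calG_\phi$ becomes equivalent to the existence of a choice for the existential players that forces $s_\top$ under every single-player universal deviation, which, by induction on the number of alternations, is exactly the truth of $\phi$. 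Designing the gadget so that it does not introduce spurious equilibria or interact unintendedly with deviations in earlier rounds is the most delicate point, but once it is in place the reduction is polynomial and the desired \PSPACE-hardness follows.
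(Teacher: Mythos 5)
There is a genuine gap, and it lies in your choice of $k=1$. With $k=1$ you are asking for resilience against \emph{single-player} deviations only, so the $(\Leftarrow)$ direction can never recover the universal quantification of QBF: a $1$-resilient profile fixes one concrete assignment of the universal variables, and resilience only guarantees that flipping \emph{one} universal variable (with the existential players' reactions frozen by their strategies) still reaches $s_\top$. QBF validity requires the existential strategy to win against \emph{every} combination of universal choices, i.e.\ against arbitrary joint deviations of all universal players simultaneously. No matching-penny gadget appended after the verification phase can repair this, because the gadget can only filter plays by where they end up, not re-introduce the missing quantifier alternation. Indeed, for $k=1$ and reachability/mean-payoff objectives the pure Nash equilibrium existence problem sits in \NP{} (cf.\ the cited works on Nash equilibria in concurrent games), so a correct polynomial-time reduction from QBF to the $k=1$ case would collapse \PSPACE{} to \NP. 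Your remark that ``larger values of $k$ can then be handled by a padding argument'' has the difficulty backwards: the larger $k$ is not an obstacle to be padded away, it is the essential resource.

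The paper's construction makes exactly this point: it uses $2n+2$ players (one player per \emph{literal}, plus \eve{} and \adam) and asks for an $(n+1)$-resilient equilibrium. The equilibrium outcome goes immediately to $\bot$ (good for everyone but \adam), and the only potentially profitable deviation is by a coalition consisting of \adam{} together with the $n$ literal players traversed during the quantification phase --- a coalition of size exactly $n+1$, which encodes a full truth assignment. \eve's strategy must then, for each clause, name a literal player \emph{outside} that coalition (a true literal), who will dutifully steer back to $\bot$; this is possible for all such coalitions precisely when $\phi$ is valid. So the coalition size growing with the number of variables is what carries the $\forall$-quantification, and a reduction targeting Nash equilibria cannot substitute for it.
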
}{}{}{}
\end{collect*}

Note that we already proved \PSPACE-membership in Thm.~\ref{thm:pspace-membership}.
We give the construction and intuition of the reduction to show hardness and leave the proof of correctness in the appendix.
We encode \QSAT formulas with $n$ variable into a game with $2\cdot n+2$ players, such that the formula is valid if, and only if, there is $n$-resilient equilibria.
  We assume that we are given a formula of the form $\phi = \forall x_1. \exists x_2.\ \forall x_3 \cdot \exists x_n.\ C_1 \land \cdots \land C_k$, where each $C_k$ is of the form $\ell_{1,k} \lor \ell_{2,k} \lor \ell_{3,k}$ and each $\ell_{j,k}$ is a literal (i.e. $x_m$ or $\lnot x_m$ for some $m$).
  We define the game $\calG_\phi$ as illustrated by an example in \figurename~\ref{fig:hardness}.
  It has a player $A_m$ for each positive literal $x_m$, and a player $B_m$ for each negative literal $\lnot x_m$.
  We add two extra players \eve and \adam.
  \eve is making choices for the existential quantification and \adam for the universal ones.
  When they chose a literal, the corresponding player can either go to a sink state $\bot$ or continue the game to the next quantification.
  Once a literal has been chosen for all the variables, \eve needs to chose a literal for each clause.
  The objective for \eve and the literal players is to reach $\bot$.
  The objective for \adam is to reach $\top$.
  We ask whether there is a $(n+1)$-resilient equilibrium.

  To a history $h=\adam_1 \cdot X_1\cdot \eve_2 \cdot X_2 \cdot \adam_3 \cdots \eve_m \cdot X_m$ with $X_i \in \{ A_i,B_i\}$, we associate a valuation $v_h$, such that $v_h(x_i) = \true$ if $X_i = B_i$ and $v_h(x_i) = \false$ if $X_i = A_i$.
  Intuitively, \eve has to find a valuation that makes the formula hold, while \adam tries to falsify it.

\begin{collect}{appendix-hardness}{}{}
\begin{proof}
  Note first that if the outcome is going to the state winning for \adam, it is possible for a $A_i$ to change its strategy and go to $\bot$, thus improving its payoff.
  Therefore a $(n+1)$-resilient equilibrium is necessarily losing for \adam and winning for all the others.

  \fbox{Validity $\Rightarrow$ equilibrium.}
  Assume that $\phi$ is valid, we will show that there is a $(n+1)$-resilient equilibrium.
  We define a strategy of \eve such that if $v_h$ makes $\exists x_m.\ \forall x_{m+1} \cdots \exists x_{n}.\ C_1 \land \cdots \land C_k$ valid, then $\sigma_\exists(h) = X_{m}$ such that $v_{h\cdot X_m}$ makes $\forall x_{m+1} \cdots \exists x_{n}.\ C_1 \land \cdots \land C_k$ valid.
  As $\phi$ is valid, we know that for all outcomes~$h$ of $\sigma_\exists$ of the form $\adam_1 \cdot X_1\cdots \eve_k \cdot X_k$, $v_h$ makes $C_1 \land \cdots \land C_k$ valid.
  Then from $X_m$, \eve can choose for each close a state $Y$ that is different from all $X_1 \dots X_m$.
  We also fix the strategy of all players $A_i$ and $B_i$ and \adam to go to the state $\bot$.
  This defines a strategy profile that we will write $\sigma_\Agt$.

  Consider a strategy profile $\sigma'_\Agt$ where at most $(n+1)$ strategies are different from the ones in $\sigma_\Agt$. 
  Assume $\sigma'_\Agt$ reaches $\eve_m$.
  We know that in $\sigma'_\Agt$ at least $n+1$ strategies are different from the ones $\sigma_\Agt$ and $\{ A \in \Agt \mid \sigma'_A \ne \sigma_A\} = \{ \adam , X_1 , \dots , X_m\}$.
  Then, by the choice of the strategy for \eve, the states that are seen in the following are controlled by players that are different from $X_1,\dots,X_m$.
  Thus the run ends in $\bot$.

  \medskip
  \fbox{Equilibrium $\implies$ validity.}
  Assume that $\sigma_\shortEve$ is part of a $(n+1)$-resilient equilibrium, we will show that $\phi$ is valid.
  Given a partial valuation $v_m \colon \{ x_1 , \dots , x_m\} \mapsto \{ \true, \false\}$, we define the function $f(v_m)$ such that:
  \[f(v_m) \Leftrightarrow \sigma_\exists(\adam_1 \cdot X_1 \cdots \adam_m \cdot X_m) = B_{m+1}.\]
  We will show that every valuation $v$, such that $v(x_{2k}) = f (v_{|2k-1})$, makes the formula $C_1 \land \cdots \land C_k$ valid, which shows that the formula~$\phi$ is valid.
  
  For all such valuations $v$, we can define strategies of \adam and players $X_i$ such that $X_i = A_i$ if $v(x_i) = \false$ and $X_i = B_i$ otherwise, such that keeping all other strategies similar to $\sigma_\Agt$, the state $\eve_m$ is reached.
  Then, if we see a state belonging to one of the $X_i$, we can make the strategy go to the $\top$ state.
  Since the profile is $(n+1)$-resilient, this is impossible.
  Which shows that $\sigma_\eve$ chooses for each clause a literal such that $v(\ell) = \true$.
  Therefore $v$ makes the formula $C_1 \land \cdots \land C_k$ valid.
\end{proof}
\end{collect}
  \begin{figure}[thb]
    \centering{
      \begin{tikzpicture}[yscale=0.6]
        \everymath{\scriptstyle}
        \draw (0,0) node[draw,rounded corners=2mm] (I1) {$\Adam$};
        \draw (1,1) node[draw,rounded corners=2mm] (A1) {$A_1$};
        \draw (1,-1) node[draw,rounded corners=2mm] (B1) {$B_1$};
        \draw (1,0) node[draw,rounded corners=2mm,dashed] (C1) {$\bot$};

        \draw (2,0) node[draw,rounded corners=2mm] (I2) {$\Eve$};
        \draw (3,1) node[draw,rounded corners=2mm] (A2) {$A_2$};
        \draw (3,-1) node[draw,rounded corners=2mm] (B2) {$B_2$};
        \draw (3,0) node[draw,rounded corners=2mm,dashed] (C2) {$\bot$};

        \draw (4,0) node[draw,rounded corners=2mm] (I3) {$\Adam$};
        \draw (5,1) node[draw,rounded corners=2mm] (A3) {$A_3$};
        \draw (5,-1) node[draw,rounded corners=2mm] (B3) {$B_3$};
        \draw (5,0) node[draw,rounded corners=2mm,dashed] (C3) {$\bot$};

        \draw (6,0) node[draw,rounded corners=2mm] (I4) {$\Eve$};
        \draw (7,1) node[draw,rounded corners=2mm] (A4) {$A_4$};
        \draw (7,-1) node[draw,rounded corners=2mm] (B4) {$B_4$};
        \draw (7,0) node[draw,rounded corners=2mm,dashed] (C4) {$\bot$};

        \draw (8,0) node[draw,rounded corners=2mm] (I5) {$\Eve$};
        \draw (9,1) node[draw,rounded corners=2mm] (CA1) {$A_1$};
        \draw (9,0) node[draw,rounded corners=2mm] (CA2) {$A_2$};
        \draw (9,-1) node[draw,rounded corners=2mm] (CB3) {$B_3$};
        \draw (10,1) node[draw,rounded corners=2mm,dashed] (C5) {$\top$};
        \draw (10,-1) node[draw,rounded corners=2mm,dashed] (C6) {$\top$};

        \draw (10,0) node[draw,rounded corners=2mm] (I6) {$\Eve$};
        \draw (11,1) node[draw,rounded corners=2mm] (DB2) {$B_2$};
        \draw (11,0) node[draw,rounded corners=2mm] (DA3) {$A_3$};
        \draw (11,-1) node[draw,rounded corners=2mm] (DA4) {$A_4$};

        \draw (12,0) node[draw,rounded corners=2mm,dashed] (END) {$\bot$};

        \draw[-latex'] (I1) -- (A1);
        \draw[-latex'] (I1) -- (B1);
        \draw[-latex',dashed] (I1) -- (C1);
        \draw[-latex'] (A1) -- (I2);
        \draw[-latex'] (B1) -- (I2);
        \draw[-latex',dashed] (A1) -- (C1);
        \draw[-latex',dashed] (B1) -- (C1);

        \draw[-latex'] (I2) -- (A2);
        \draw[-latex'] (I2) -- (B2);
        \draw[-latex'] (A2) -- (I3);
        \draw[-latex'] (B2) -- (I3);
        \draw[-latex',dashed] (A2) -- (C2);
        \draw[-latex',dashed] (B2) -- (C2);
        
        \draw[-latex'] (I3) -- (A3);
        \draw[-latex'] (I3) -- (B3);
        \draw[-latex'] (A3) -- (I4);
        \draw[-latex'] (B3) -- (I4);
        \draw[-latex',dashed] (A3) -- (C3);
        \draw[-latex',dashed] (B3) -- (C3);

        \draw[-latex'] (I4) -- (A4);
        \draw[-latex'] (I4) -- (B4);
        \draw[-latex'] (A4) -- (I5);
        \draw[-latex'] (B4) -- (I5);
        \draw[-latex',dashed] (A4) -- (C4);
        \draw[-latex',dashed] (B4) -- (C4);

        \draw[-latex'] (I5) -- (CA1);
        \draw[-latex'] (I5) -- (CA2);
        \draw[-latex'] (I5) -- (CB3);
        \draw[-latex',dashed] (CA1) -- (I6);
        \draw[-latex',dashed] (CA2) -- (I6);
        \draw[-latex',dashed] (CB3) -- (I6);
        \draw[-latex'] (CA1) -- (C5);
        \draw (CA2) edge[-latex'] (C5.-100);
        \draw[-latex'] (CB3) -- (C6);

        \draw[-latex'] (I6) -- (DB2);
        \draw[-latex'] (I6) -- (DA3);
        \draw[-latex'] (I6) -- (DA4);
        \draw[-latex'] (DB2) -- (C5);
        \draw (DA3) edge[-latex'] (C5.-80);
        \draw[-latex'] (DA4) -- (C6);

        \draw[-latex',dashed] (DB2) -- (END);
        \draw[-latex',dashed] (DA3) -- (END);
        \draw[-latex',dashed] (DA4) -- (END);

      \end{tikzpicture}
      \caption{Encoding of a formula  $\phi = \forall x_1. \exists x_2.\ \forall x_3.\ \exists x_4.\ (x_1 \lor x_2 \lor \lnot x_3) \land (\lnot x_2 \lor x_3 \lor x_4)$.
        The dashed edges represent the strategies in the equilibrium of the players other than \eve.
      }\label{fig:hardness}
    }
  \end{figure}


\newpage

\bibliography{bib}

\newpage
\renewcommand\mylabel[2]{(#1.~\ref{#2} in the body of the paper)}

\appendix

\section{Appendix for Section~3}
\includecollection{appendix-randomised}

\section{Appendix for Section~4}
\includecollection{appendix-suspect}

\section{Appendix for Section~5}
\includecollection{appendix-mean-payoff}

\section{Appendix for Section~6}
\includecollection{appendix-fixed-coalition}

\section{Appendix for Section~7}
\includecollection{appendix-hardness}

\end{document}